\DeclareMathOperator*{\argmaxA}{arg\,max}
\DeclareMathOperator*{\argminA}{arg\,min}
\DeclareMathOperator*{\hrr}{\sc HRR}
\DeclareMathOperator*{\forc}{for}
\DeclareMathOperator*{\mae}{MAE}
\DeclarePairedDelimiter\ceil{\lceil}{\rceil}
\DeclarePairedDelimiter\floor{\lfloor}{\rfloor}
\DeclarePairedDelimiter\abs{\lvert}{\rvert}%
\DeclarePairedDelimiter\norm{\lVert}{\rVert}%
\newcommand\rd{\textsuperscript{rd}\xspace}
\newcommand\nth{\textsuperscript{th}\xspace}
\newtheorem{theorem}{Theorem}
\newtheorem{definition}{Definition}
\newtheorem{lemma}[theorem]{Lemma}
\newcolumntype{Y}{>{\centering\arraybackslash}X}
\title{Robust Online Regression in Cautious Consolidation}
\title{Online Robust Regression via Cautious Robust Consolidation}
\title{Online Robust Regression under Heterogeneously Distributed Corruption}
\title{Scalable Robust Regression under Arbitrarily Distributed Data Corruption}
\title{Online Robust Regression under Arbitrarily Distributed Data Corruption}
\title{Online Robust Regression under Adversary \\Data Corruption}
\title{Online and Distributed Robust Regressions \\under Adversarial Data Corruption}
\author{Xuchao Zhang{$^\dag$}, Liang Zhao{$^\ddag$}, Arnold P. Boedihardjo{$^\S$}, Chang-Tien Lu{$^\dag$} \\ 
	{$^\dag$}Virginia Tech, Falls Church, VA, USA\\
	{$^\ddag$}George Mason University, Fairfax, VA, USA\\
	{$^\S$}U. S. Army Corps of Engineers, Alexandria, VA, USA \\
	{$^\dag$}\{xuczhang, ctlu\}@vt.edu, {$^\ddag$}lzhao9@gmu.edu, {$^\S$}arnold.p.boedihardjo@usace.army.mil}
\begin{document}

\maketitle

\begin{abstract}
In today's era of big data, robust least-squares regression becomes a more challenging problem when considering the adversarial corruption along with explosive growth of datasets. Traditional robust methods can handle the noise but suffer from several challenges when applied in huge dataset including 1) computational infeasibility of handling an entire dataset at once, 2) existence of heterogeneously distributed corruption, and 3) difficulty in corruption estimation when data cannot be entirely loaded. This paper proposes online and distributed robust regression approaches, both of which can concurrently address all the above challenges. Specifically, the distributed algorithm optimizes the regression coefficients of each data block via heuristic hard thresholding and combines all the estimates in a distributed robust consolidation. Furthermore, an online version of the distributed algorithm is proposed to incrementally update the existing estimates with new incoming data. We also prove that our algorithms benefit from strong robustness guarantees in terms of regression coefficient recovery with a constant upper bound on the error of state-of-the-art batch methods. Extensive experiments on synthetic and real datasets demonstrate that our approaches are superior to those of existing methods in effectiveness, with competitive efficiency.


\end{abstract}


%
\IEEEpeerreviewmaketitle

\section{Introduction}
In the era of data explosion, the fast-growing amount of data makes processing entire datasets at once remarkably difficult. For instance, urban Internet of Things (IoT) systems \cite{zanella2014internet} can produce millions of data records every second in monitoring air quality, energy consumption, and traffic congestion. More challenging, the presence of noise and corruption in real-world data can be inevitably caused by accidental outliers \cite{rousseeuw2005robust}, transmission loss \cite{saltzberg1967performance}, or even adversarial data attacks \cite{chen2013robust}. As the most popular statistical approach, the traditional least-squares regression method is vulnerable to outlier observations \cite{maronna2006robust} and not scalable to large datasets \cite{mcwilliams2014fast}. By considering both robustness and scalability in a least-squares regression model, we study scalable robust least-squares regression (\textit{SRLR}) to handle the problem of learning a reliable set of regression coefficients given a large dataset with several adversarial corruptions in its response vector. A commonly adopted model from existing robust regression methods \cite{bhatia2015robust}\cite{rlhh17} assumes that the observed response is obtained from the generative model $\bm y = X^T\bm \beta_* + \bm u$, where $\bm \beta_*$ is the true regression coefficients that we wish to recover and $\bm u$ is the corruption vector with arbitrary values. However, in the \textit{SRLR} problem, our goal is to recover the true regression coefficients under the assumption that both the observed response $\bm y$ and data matrix $X$ are too large to be loaded into a single machine. 
Due to the ubiquitousness of data corruptions and explosive data growth, \textit{SRLR} has become a critical component of several important real-world applications in various domains such as economics \cite{baldauf2012use}, signal processing \cite{6217389}, and image processing \cite{naseem2012robust}.


Existing robust learning methods typically focus on modeling the entire dataset at once; however, they may meet the bottleneck in terms of computation and memory as more and more datasets are becoming too large to be handled integrally. For those seeking to address this issue, the major challenges can be summarized as follows. 
1) \textbf{Computational infeasibility of handling the entire dataset at once.} Existing robust methods typically generate the predictor by learning on the entire training dataset. However, the explosive growth of data makes it infeasible to handle the entire dataset up to a terabyte or even petabyte at once. Therefore, a scalable algorithm is required to handle the robust regression task for massive datasets.
2) \textbf{Existence of heterogeneously distributed corruption.} Due to the unpredictability of corruptions, the corrupted samples can be arbitrarily distributed in the whole dataset. Considering the entire dataset as the combination of multiple mini-batches, some batches may contain large amounts of outliers. Thus, simply applying the robust method on each batch and averaging all the estimates together is not an ideal strategy, as some estimates will be arbitrarily poor and break down the overall performance of robustness.
3) \textbf{Difficulty in corruption estimation when data cannot be entirely loaded.} Most robust methods assume the corruption ratio of input data is a known parameter; however, if a small batch of data can be loaded as inputs for robust methods, it is infeasible to know the corruption ratio of all the mini-batches. Moreover, simply using a unified corruption ratio for all the mini-batches is clearly not an ideal solution as corrupted samples can be regarded as uncorrupted, and vice versa. 
In addition, even though some robust methods can estimate the corruption ratio based on data observations, it is also infeasible to estimate the ratio when corruption in one mini-batch is greater than 50\%. However, the situation can be very common when corruption is heterogeneously distributed. 


In order to simultaneously address all these technical challenges, this paper presents a novel Distributed Robust Least-squares Regression (\textit{DRLR}) method and its online version, named Online Roubst Least-squares Regression (\textit{ORLR}) to handle the scalable robust regression problem in large datasets with adversarial corruption. In \textit{DRLR}, the regression coefficient of each mini-batch is optimized via heuristic hard thresholding, and then all the estimates are combined in distributed robust consolidation. Based on \textit{DRLR}, the \textit{ORLR} algorithm incrementally updates the existing estimates by replacing old corrupted estimates with those of new incoming data, which is more efficient than \textit{DRLR} in handling new data and reflects the time-varying characteristics. Also, we prove that both \textit{DRLR} and \textit{ORLR} preserve the overall robustness of regression coefficients in the entire dataset. The main contributions of this paper are as follows:

\begin{itemize}
	\item \textbf{Formulating a framework for the {SRLR} problem.}
	A framework is proposed for scalable robust least-squares regression problem where the entire data with adversarial corruption is too large to store in memory all at once. Specifically, given a large dataset with adversarial corruptions, a reliable set of regression coefficients is learned with limited memory.
	
	\item \textbf{Proposing online and distributed algorithms to handle the adversarial corruption.} 
	By utilizing robust consolidation methods, we propose both online and distributed algorithms to obtain overall robustness even though the corruption is arbitrarily distributed. Moreover, the online algorithm performs more efficiently in handling new incoming data and presents the time-varying characteristics of regression coefficients.
	
	\item \textbf{Providing a rigorous robustness guarantee for regression coefficient recovery.} We prove that our online and distributed algorithms recover the true regression coefficient with a constant upper bound on the error of state-of-the-art batch methods	under the assumption that corruption can be heterogeneously distributed. Specifically, the upper bound of online algorithm will be infinitely close to distributed algorithm when the number of mini-batches is large enough.
	
	\item \textbf{Conducting extensive experiments for performance evaluations.} The proposed method was evaluated on both synthetic data and real-world datasets with various corruption and data-size settings. The results demonstrate that the proposed approaches consistently outperform existing methods along multiple metrics with a competitive running time.
	
\end{itemize}

The rest of this paper is organized as follows. Section \ref{section:related_work} reviews background and related work, and Section \ref{section:problem_formulation} introduces the problem setup. The proposed online and distributed robust regression algorithms are presented in Section \ref{section:model}. Section \ref{section:recovery_analysis} presents the proof of recovery guarantee in regression coefficients. The experiments on both synthetic and real-world datasets are presented in Section \ref{section:experiment}, and the paper concludes with a summary of the research in Section \ref{section:conclusion}.

\section{Related Work} \label{section:related_work}

The work related to this paper is summarized in two categories below.

\textbf{Robust regression model:} A large body of literature on the robust regression problem has been built over the last few decades. 
Most of studies focus on handling stochastic noise or small bounded noise \cite{ICML2013_chen13d}\cite{loh2011high}\cite{rosenbaum2010sparse}, but these methods, modeling the corruption on stochastic distributions, cannot be applied to data that may exhibit malicious corruption \cite{chen2013robust}. Some studies assume the adversarial corruption in the data, but most of them lack the strong guarantee of regression coefficients recovery under the arbitrary corruption assumption \cite{chen2013robust}\cite{mcwilliams2014fast}. Chen et al. \cite{chen2013robust} proposed a robust algorithm based on a trimmed inner product, but the recovery boundary is not tight to ground truth in a massive dataset. McWilliams et al. \cite{mcwilliams2014fast} proposed a sub-sampling algorithm for large-scale corrupted linear regression, but their recovery result is not close to an exact recovery \cite{bhatia2015robust}. To pursue exact recovery results for robust regression problem, some studies focused on $L_1$ penalty based convex formulations \cite{Wright:2010:DEC:1840493.1840533}\cite{nguyen2013exact}. However, these methods imposed severe restrictions on the data distribution such as row-sampling from an incoherent orthogonal matrix\cite{nguyen2013exact}. 

Currently, most research in this area requires the corruption ratio parameter, which is difficult to determine under the assumption that the dataset can be arbitrarily corrupted. For instance, She and Owen \cite{10.2307/41416397} rely on a regularization parameter to control the size of the uncorrupted set based on soft-thresholding. Instead of a regularization parameter, Chen et al. \cite{icml2013_chen13h} require the upper bound of the outliers number, which is also difficult to estimate. Bhatia et al. \cite{bhatia2015robust} proposed a hard-thesholding algorithm with a strong guarantee of coefficient recovery under a mild assumption on input data. However, its recovery error can be more than doubled in size if the corruption ratio is far from the true value. Recently, Zhang et al. \cite{rlhh17} proposed a robust algorithm that learns the optimal uncorrupted set via a heuristic method. However, all of these approaches require the entire training dataset to be loaded and learned at once, which is infeasible to apply in massive and fast growing data. 

\textbf{Online and distributed learning:} Most of the existing online learning methods optimize surrogate functions such as stochastic gradient descent \cite{duchi2011adaptive}\cite{mairal2010online} to update estimates incrementally. For instance, Duchi et al. \cite{duchi2011adaptive} proposed a new, informative subgradient method that dynamically incorporates the geometric knowledge of the data observed in earlier iterations. Some adaptive linear regression methods such as recursive least squares \cite{engel2004kernel} and online passive aggressive algorithms \cite{crammer2006online} provide an incremental update on the regression model for new data to capture time-varying characteristics. However, these methods cannot handle the outlier samples in the streaming data. For distributed learning \cite{5499155}\cite{boyd2011distributed}, most approaches such as MapReduce \cite{dean2008mapreduce} focus on distributed solutions for large-scale problems that are not robust to noise and corruption in real-world data. 

The existing distributed robust optimization methods can be divided into two categories: those that use moment information \cite{doan2012robust}\cite{kang2015distribution} and those that utilize directly on the probability distributions \cite{cromvik2010robustness}\cite{dupavcova2012robustness}\cite{ben2013robust}. For instance, Delage et al. \cite{doi:10.1287/opre.1090.0741} proposed a model that describes uncertainty in both the distribution form and moments in a distributed robust stochastic program. However, these methods assume either the moment information or probability distribution as prior knowledge, which is difficult to know in practice. In robust online learning, few methods have been proposed in the past few years. For instance, Sharma et al. \cite{CEM:CEM2792} proposed an online smoothed passive-aggressive algorithm to update estimates incrementally in a robust manner. However, the method assumes the corruption is in stochastic distributions, which is infeasible for data with adversarial corruption. Recently, Feng et al. \cite{DBLP:journals/corr/FengXM17} proposed an online robust learning approach that gives a provable robustness guarantee under the assumption that data corruption is heterogeneously distributed. However, the method requires that the corruption ratio of each data batch be given as parameters, which is not practical for users to estimate.


\section{Problem Formulation}\label{section:problem_formulation}

In this section, the problem addressed by this research is formulated.

In the setting of online and distributed learning, we consider the samples to be provided in a sequence of mini batches as $\{X^{(1)}, \dots , X^{(m)}\}$, where $X^{(i)} \in \mathbbm{R}^{p \times n}$ represents the sample data for the $i\nth$ batch. We assume the corresponding response vector $\bm y^{(i)} \in \mathbbm{R}^{n \times 1}$ is generated using the following model:
\begin{equation} \label{eq:model}
\bm y^{(i)} = {\big[\bm X^{(i)}\big]}^T \boldsymbol \beta_* + \bm u^{(i)} + \bm \varepsilon^{(i)}
\end{equation}
where $\bm \beta_* \in \mathbbm{R}^{p \times 1}$ is the ground truth coefficients of the regression model and $\bm u^{(i)}$ is the adversarial corruption vector of the $i\nth$ mini-batch. $\bm \varepsilon^{(i)}$ represents the additive dense noise for the $i\nth$ mini batch, where $\varepsilon_j^{(i)} \sim \mathcal{N}(0, \sigma^2)$. The notations used in this paper are summarized in Table \ref{table:math_notation}. 

The goal of addressing our problem is to recover the regression coefficients $\hat{\bm \beta}$ and determine the uncorrupted set $\hat{S}$ for the entire dataset. The problem is formally defined as follows:
\begin{equation} \label{eq:problem}
\begin{aligned}
\hat{\bm \beta}, \hat{S} =& \argminA_{\bm \beta, S}\big\| \bm y_S  - X_S^T \bm \beta \big\|_2^2\\
s.t.\ \ \ S\in\big\{\Omega\big(Z\big)\ &\big|\ \forall i \le m, \forall j \le \abs{Z^{(i)}}: Z_j^{(i)} \ge h(\bm r^{(i)}) \big\}
\end{aligned}
\end{equation}

We define $Z^{(i)}$ as the estimated uncorrupted set for the $i\nth$ mini-batch and $Z = \{Z^{(1)},...\ ,Z^{(m)}\}$ as the collection of uncorrupted sets for all the mini-batches. The size of set $Z^{(i)}$ is represented as $\abs{Z^{(i)}}$. The function $\Omega(\cdot)$ consolidates the estimates of all the mini-batches in terms of the distributed or online setting. $\bm y_S$ restricts the row of $\bm y$ to indices in $S$, and $X_S$ signifies that the columns of $X$ are restricted to indices in $S$. Therefore, we have $\bm y_S \in \mathbbm{R}^{|S| \times 1}$ and $X_S \in \mathbbm{R}^{p \times |S|}$, where $p$ is the number of features and $|S|$ is the size of the uncorrupted set $S\subset[m\cdot n]$. The notation $Z^{(i)}_*=\overline{supp(\bm u^{(i)})}$ represents the true set of uncorrupted points in the $i\nth$ mini-batch. Also, the residual vector $\bm r^{(i)} \in \mathbbm{R}^n$ of the $i\nth$ mini-batch is defined as $\bm r^{(i)} = \bm y^{(i)}  - {\big[X^{(i)}\big]}^T \bm \beta$. Specifically, we use the notation $\bm r^{(i)}_Z$ to represent the $|Z^{(i)}|$-dimensional residual vector containing the components in $Z^{(i)}$. The constraint of $Z^{(i)}$ is determined by function $h(\cdot)$, which is designed to estimate the size of the uncorrupted set of each mini-batch according to the residual vector $\bm r^{(i)}$. The uncorrupted set of each mini-batch will be consolidated by function $\Omega(\cdot)$ in both online and distributed approaches. The details of the heuristic function $h(\cdot)$ and consolidation function $\Omega(\cdot)$ will be explained in Section \ref{section:model}.

\begin{table}[t]
	\caption{Math Notations}
	\centering
	\label{table:math_notation}
	\tabcolsep=0.15cm
	\begin{tabular}{ l|l }
		\toprule
		Notations                                          & Explanations                                                                 \\ \hline
		$X^{(i)} \in \mathbbm{R}^{p\times n}$              & collection of data samples of the $i\nth$ mini-batch                         \\
		$\bm y^{(i)} \in \mathbbm{R}^{n\times1}$           & response vector of the $i\nth$ mini-batch                                    \\
		$\bm \beta^{(i)} \in \mathbbm{R}^{p\times1}$       & estimated regression coefficient of the $i\nth$ batch                        \\
		$\bm \beta^{(i)}_* \in \mathbbm{R}^{p\times1}$     & ground truth regression coefficient of the $i\nth$ batch                         \\
		$\bm u^{(i)} \in \mathbbm{R}^{n\times1}$           & corruption vector of the $i\nth$ batch                                           \\
		$\bm r^{(i)} \in \mathbbm{R}^{n\times1}$           & residual vector of the $i\nth$ batch                                             \\
		$\bm \varepsilon^{(i)} \in \mathbbm{R}^{n\times1}$ & dense noise vector of the $i\nth$ batch                                      \\
		$ Z^{(i)} \subseteq [n]$                           & estimated uncorrupted set of the $i\nth$ batch                                   \\
		$ Z^{(i)}_* \subseteq [n]$                         & ground truth uncorrupted set, where $Z^{(i)}_*=\overline{supp(\bm u^{(i)})}$ \\ 
		$S \subseteq [m\cdot n]$ & estimated uncorrupted set of entire dataset \\ 
		\bottomrule
	\end{tabular}
\end{table}

The problem defined above is very challenging in the following three aspects. First, the least-squares function can be naively solved by taking the derivative to zero. However, as the data samples of all $m$ mini-batches are too large to be loaded into memory simultaneously, it is impossible to calculate $\bm \beta$ from all the batches directly by this method. Moreover, based on the fact that the corruption ratio can be varied for each mini-batch, we cannot simply estimate the corruption set by using a fixed ratio for each mini-batch. In addition, since corruption is not uniformly distributed, some mini-batches may contain an overwhelmingly amount of corrupted samples. The corresponding estimates of regression coefficients can be arbitrarily poor and break down the overall result. In the next section, we present both online and distributed robust regression algorithms based on heuristic hard thresholding and robust consolidation to address all three challenges.

\section{Methodology}\label{section:model}
In this section, we propose both online and distributed robust regression algorithms to handle large datasets in multiple mini-batches. To handle each single mini-batch among these mini-batches, a heuristic robust regression method (\textit{HRR}) is proposed in Section \ref{section:single_regression}. Based on \textit{HRR}, a new approach, \textit{DRLR}, is presented in Section \ref{section:distribute_algo} to process multiple mini-batches in distributed manner. Furthermore, in Section \ref{section:online_algo}, a novel online version of \textit{DRLR}, namely \textit{ORLR}, is proposed to incrementally update the estimate of regression coefficients with new incoming data. 

\subsection{ Single-Batch Heuristic Robust Regression}\label{section:single_regression}
In order to efficiently solve the single batch problem when $m$ = $1$ in Equation \eqref{eq:problem}, we propose a robust regression algorithm, \textit{HRR}, based on heuristic hard thresholding. The algorithm heuristically determines the uncorrupted set $Z^{(i)}$ for the $i\nth$ mini-batch according to its residual vector $\bm r^{(i)}$. Specifically, a novel heuristic function $h(\cdot)$ is proposed to estimate the lower-bound size of the uncorrupted set $Z^{(i)}$ for each mini batch, which is formally defined as 
\begin{equation} \label{eq:tau_opt}
\begin{aligned}
h(\bm r^{(i)}) :=& \argmaxA_{\tau \in \mathbbm{Z}^+, \tau \le n} \tau 
\ \ \ \ s.t.\ \ r^{(i)}_{\varphi(\tau)} \le \frac{2\tau r^{(i)}_{\varphi(\tau_o)}}{\tau_o}
\end{aligned}
\end{equation}

where the residual vector of $i\nth$ mini-batch is denoted by $\bm r^{(i)} = \bm y^{(i)} - {\big[X^{(i)}\big]}^T\bm \beta^{(i)}$, and $r^{(i)}_{\varphi(k)}$ represents the $k\nth$ elements of $\bm r^{(i)}$ in ascending order of magnitude. The variable $\tau_o$ in the constraint is defined as
\begin{equation} \label{eq:tau_o}
\begin{aligned}
\tau_o &= \argminA_{\ceil*{n/2} \le \tau\le n} \ \abs*{ {\Big( r^{(i)}_{\varphi(\tau)} \Big)}^2  - \frac{\norm{\bm r^{(i)}_{Z_{\tau'}}}_2^2}{\tau'}\ \ } \\
\end{aligned}
\end{equation}
where $\tau'=\tau - \ceil*{n/2}$ and $Z_{\tau'}$ is the position set containing the smallest $\tau'$ elements in residual $\bm r^{(i)}$. 

The design of the heuristic estimator follows a natural intuition that data points with unbounded corruption always have a residual higher in magnitude than that of uncorrupted data. Moreover, the constraint in Equation \eqref{eq:tau_opt} ensures the residual of the largest element $\tau$ in our estimation cannot be too much larger than the residual of a smaller element $\tau_o$. If the element $\tau_o$ is too small, some uncorrupted elements will be excluded from our estimation, but if the element is too large, some corrupted elements will be included. The formal definition of $\tau_o$ is shown in Equation \eqref{eq:tau_o}, in which $\tau_o$ is defined as a value whose squared residual is closest to ${\norm{\bm r^{(i)}_{Z_{\tau'}}}_2^2}/{\tau'}$, where $\tau'$ is less than the ground truth threshold $\tau_*$. This design ensures that $\abs{Z^{(i)}_* \cap Z^{(i)}_t} \geq \tau - n/2$, which means at least $\tau - n/2$ elements are correctly estimated in $Z^{(i)}_t$. In addition, the precision of the estimated uncorrupted set can be easily achieved when fewer elements are included in the estimation, but with low recall value. To increase the recall of our estimation, the objective function in Equation \eqref{eq:tau_opt} chooses the maximum uncorrupted set size. 

\begin{algorithm2e}[t]
	\small
	\DontPrintSemicolon 
	\KwIn{Corrupted data samples $X \in \mathbbm{R}^{p\times n}$ and response vector $\bm y \in \mathbbm{R}^{n\times1}$ for single mini batch, tolerance $\epsilon$}
	\KwOut{solution $\hat{\bm \beta}$, $\hat{Z}$}
	$Z_0$ = [n], $t$ $\leftarrow$ 0\\
	\Repeat{$\|\bm r_{Z_{t+1}}^{t+1}-\bm r_{Z_{t}}^{t}\|_2 < \epsilon n$}
	{
		$\bm \beta^{t+1} \leftarrow (X_{Z_t}X_{Z_t}^T)^{-1}X_{Z_t}\bm y_{Z_t}$ \\
		$\bm r^{t+1} \leftarrow |\bm y - X^T \bm \beta^{t+1}|$ \\	
		$Z_{t+1} \leftarrow$ $\mathcal{H}(\bm r^{t+1})$, where $\mathcal{H}(\cdot)$ is defined in Equation \eqref{eq:hht}.\\
		$t \leftarrow t+1$ \\
	}
	\textbf{return} $\bm \beta^{t+1}$, $Z_{t+1}$
	\caption{{\sc Hrr Algorithm}}
	\label{algo:hrr}
\end{algorithm2e}

Applying the uncorrupted set size generated by $h(\cdot)$, the heuristic hard thresholding is defined as follows: 
\begin{definition}[\textbf{Heuristic Hard Thresholding}] \label{def:HHT}
	Defining $\varphi^{-1}_{\bm r}(i)$ as the position of the $i$\nth element in residual vector $\bm r$'s ascending order of magnitude, the heuristic hard thresholding of $\bm r$ is defined as
	\begin{equation}\label{eq:hht}
	\begin{aligned}
	\mathcal{H}(\bm r) = \{i \in [n]: \varphi^{-1}_{\bm r}(i) \le h(\bm r)\}
	\end{aligned}
	\end{equation}
\end{definition}
The optimization of $Z^{(i)}$ is formulated as solving Equation \eqref{eq:hht}, where the set returned by $\mathcal{H}(\bm r^{(i)})$ will be used to determine regression coefficients $\bm \beta^{(i)}$.

The details of the \textit{HRR} algorithm are shown in Algorithm \ref{algo:hrr}, which follows an intuitive strategy of updating regression coefficient $\bm \beta^{(i)}$ to provide a better fit for the current estimated uncorrupted set $Z_{t}$ in Line 3, and updating the residual vector in Line 4. It then estimates the uncorrupted set $Z_{t+1}$ via heuristic hard thresholding in Line 5 based on residual vector $\bm r$ in the current iteration. The algorithm continues until the change in the residual vector falls within a small range.

\subsection{Distributed Robust Regression}\label{section:distribute_algo}
Given data samples $\{ (X^{(1)}, \bm y^{(1)}), \dots, (X^{(m)}, \bm y^{(m)}) \}$ in a sequence of mini-batches, a distributed robust regression algorithm, named \textit{DRLR}, is proposed to optimize the robust regression coefficients in distributed approach without loading entire data at one time. 
Before we dive into the details of the \textit{DRLR} algorithm, we provide some key definitions.
\begin{definition}[\textbf{Estimate Distance}] Defining $\bm \beta^{(i)}$ and $\bm \beta^{(j)}$ as the estimate of the regression coefficients for the $i\nth$ and $j\nth$ mini-batches respectively, the distance between the two estimates is defined as
\begin{equation}
	d_{i,j} = \norm{\bm \beta^{(i)} - \bm \beta^{(j)}}_2
\end{equation}
	
\end{definition}
Based on the definition of estimate distance, we define the distance vector of the $i\nth$ mini-batch as $\bm d^{(i)} \in \mathbbm{R}^{m \times 1}$, where $m$ is the total number of batches and $\bm d_j^{(i)}$ represents the distance from the estimate of the $i\nth$ batch to the $j\nth$ batch ($1 \leq j \leq m$). We also define $\sigma_k(\bm d^{(i)})$ and $\delta_k(\bm d^{(i)})$ as the value and index of the $k\nth$ smallest value in distance vector $\bm d^{(i)}$, respectively. For instance, if the $3\rd$ batch is the $5\nth$ smallest distance in $\bm d^{(i)}$ with $d_3^{(i)} = 0.3$, then we have $\sigma_5(\bm d^{(i)})=0.3$ and $\delta_5(\bm d^{(i)})=3$.

\begin{definition}[\textbf{Pivot Batch}] Given a set of mini-batch estimates $\{\bm \beta^{(1)}, \dots,\bm \beta^{(m)} \}$ and defining $\bm d^{(i)}$ as the distance vector of the $i\nth$ batch, the $p\nth$ batch is defined as pivot batch if it satisfies

\begin{equation}
p = \argminA_i \sigma_{\tilde m} (\bm d^{(i)})
\end{equation}
	
\end{definition}

where $\tilde m = \floor{m/2} + 1$ is the upper number of half batches. By using the definition of \textit{pivot batch}, we define the dominating set as follows.

\begin{definition}[\textbf{Dominating Set}] Given a set of mini-batch estimates $\{\bm \beta^{(1)}, \dots,\bm \beta^{(m)} \}$ and defining $\bm d^{(p)}$ as the distance vector of the pivot batch, the dominating set $\Psi$ is defined as follows:

\begin{equation}
\Psi = \big\{\delta_k(\bm d^{(p)}) | 1\leq k \leq \tilde{m} \big\}
\end{equation}
	
\end{definition}

The dominating set $\Psi$ selects the smallest $\tilde{m}$ batches from the distance vector $\bm d^{(p)}$ of the pivot batch, which makes a small distance between the pivot batch and any batch $j \in \Psi$. The property will be used later in the proof of Lemma \ref{lemma:l2}. Then we define the general robust consolidation of a set of regression coefficients as follows.

\begin{algorithm2e}[b]
	\small
	\DontPrintSemicolon 
	\KwIn{Corrupted data $\{ (X^{(1)}, \bm y^{(1)}), \dots, (X^{(m)}, \bm y^{(m)}) \}$ in $m$ mini batches, where $X^{(i)} \in \mathbbm{R}^{p \times n}$ and $\bm y^{(i)} \in \mathbbm{R}^{n \times 1}$.}
	\KwOut{solution $\hat{\bm \beta}$}
	\For{i = 1..m}{
		$\bm \beta^{(i)} \leftarrow \hrr (X^{(i)}, \bm y^{(i)})$ \\
	}
	$p = \argminA_i \sigma_{\tilde m} (\bm d^{(i)})$ \tcp*{{\textit{Optimize pivot batch $p$} }}
	$\Psi = \big\{\delta_k(\bm d^{(p)}) | 1\leq k \leq \tilde{m} \big\}$  \tcp*{{\textit{Find dominating set $\Psi$} }}
	$\hat{\bm \beta} = \argminA_{\bm \beta} \big\{ \frac{1}{T} \sum_{i\in \Psi} \norm{\bm \beta^{(i)} - \bm \beta}_2 \big\}$ \tcp*{{\textit{Robust consolidation} }}
	\textbf{return} $\hat{\bm \beta}$
	\caption{{\sc Drlr Algorithm}}
	\label{algo:drlr}
\end{algorithm2e}

\begin{definition}[\textbf{Robust Consolidation}] Given a set of mini-batch estimates $\{\bm \beta^{(1)}, \dots,\bm \beta^{(m)} \}$ and using $\Psi$ to denote its dominating set, the robust consolidation of the given estimates $\hat{\bm \beta}$ is defined as follows:
	
	\begin{equation}
	\hat{\bm \beta} = \argminA_{\bm \beta} \bigg\{ \frac{1}{T} \sum_{i\in \Psi} \norm{\bm \beta^{(i)} - \bm \beta}_2 \bigg\}
	\end{equation}
	
\end{definition}

\begin{figure}[!t]
	\centering
	\includegraphics[trim=3cm 5cm 5cm 3.5cm,width=0.99\linewidth]{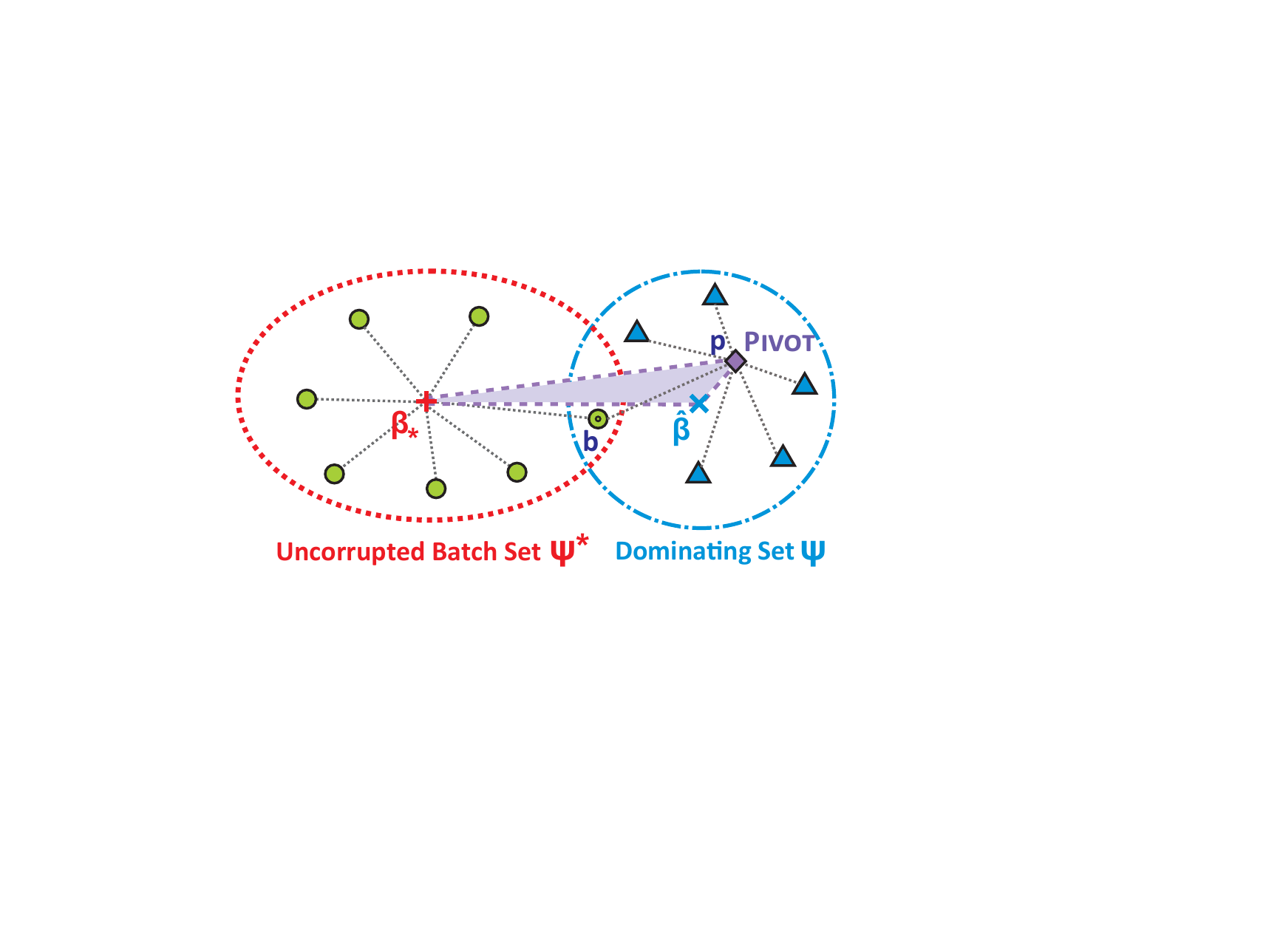}
	
	\caption{%
		\small Example for Distributed Robust Least-squares Regression 
	}%
	\label{fig:ex_drlr}
\end{figure}

\begin{figure}[!b]
	\centering
	\includegraphics[trim=3cm 0.5cm 5.2cm 5cm,width=0.99\linewidth]{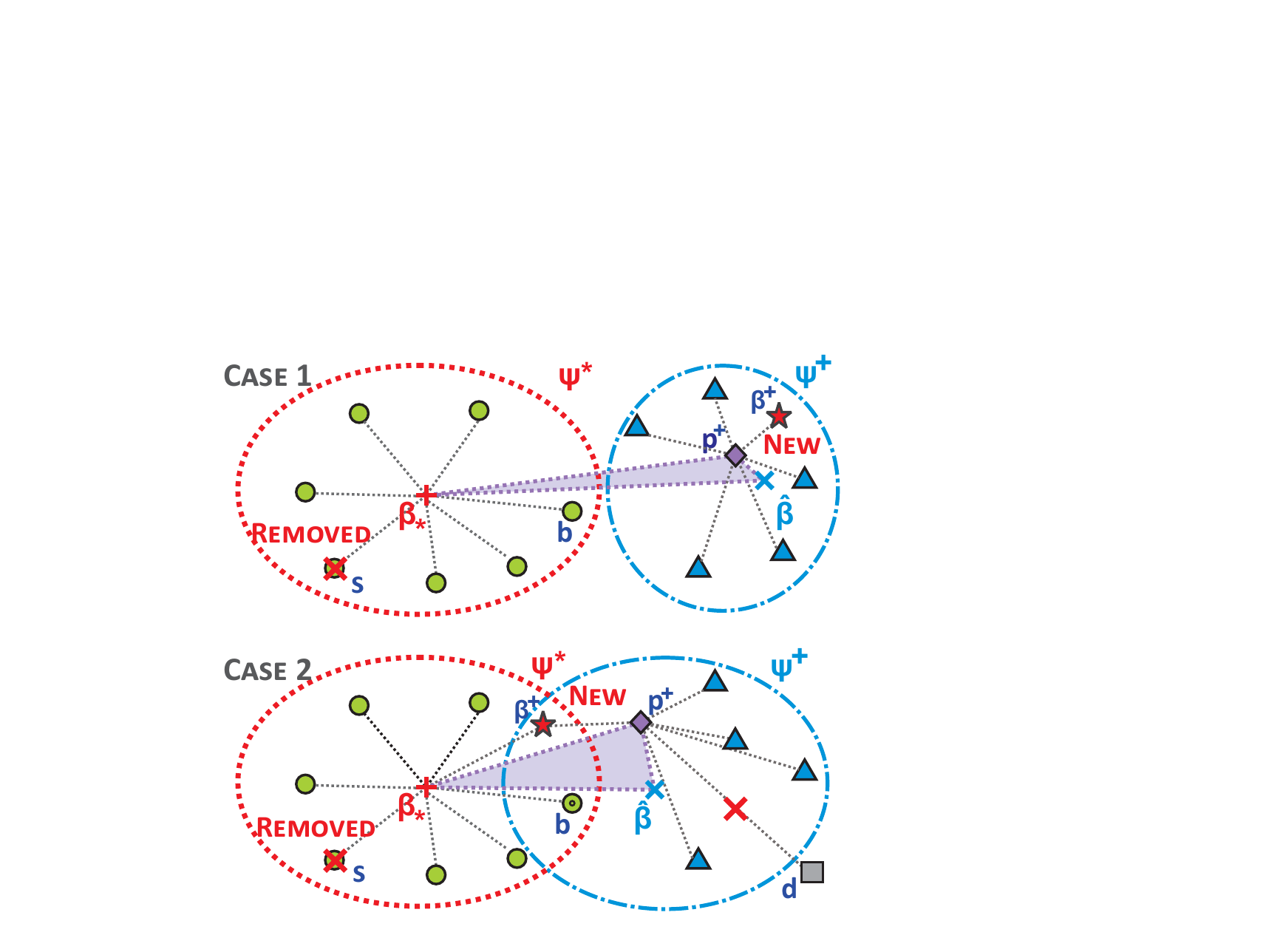}
	
	\caption{%
		\small Examples for Online Robust Least-squares Regression 
	}%
	\label{fig:ex_orlr}
\end{figure}

The \textit{DRLR} algorithm, shown in Algorithm \ref{algo:drlr}, uses $m$ mini-batches' data as input and outputs the consolidated estimate of regression coefficients $\hat{\bm \beta}$. First, the algorithm optimizes the coefficient estimate $\bm \beta^{(i)}$ of each mini batch in Line 1-2, then it combines all the estimates of mini-batches in terms of overall robustness via distributed robust consolidation. Specifically, the algorithm determines the pivot batch based on all the estimates in Line 3 and generates the dominating set $\Psi$ in Line 4. Finally, all the batch estimates are combined via robust consolidation in Line 5. Figure \ref{fig:ex_drlr} shows an example of distributed robust consolidation. The domination set $\Psi$ contains $\tilde{m}$ closest batches to pivot batch $p$ and the green circle node denotes the uncorrupted batch whose distance to ground truth coefficients $\bm \beta_*$ is less than a small error bound $\varepsilon$. We call the set containing all the green circle nodes as uncorrupted batch set $\Psi^*$. The example shows a case that only one uncorrupted batch $b$ is contained in $\Psi$, which determines the distance between $\bm \beta_*$ and pivot batch $p$. The distance between $\bm \beta_*$ and $\hat{\bm \beta}$ is upper bounded by the summation of distance $d_{\bm \beta_*, p}$ and $d_{\hat{\bm \beta}, p}$.

\subsection{Online Robust Regression}\label{section:online_algo}
The \textit{DRLR} algorithm, proposed in Section \ref{section:distribute_algo}, provides a distributed approach when a large amount of data has been collected. In this section, we present an online robust regression algorithm, named \textit{ORLR}, that incrementally updates the robust estimate based on new incoming data. Specifically, suppose the regression coefficients of the previous $m$ mini-batches $\{\bm \beta^{(1)}, \dots,\bm \beta^{(m)} \}$ have been estimated by \textit{DRLR}, the \textit{ORLR} algorithm achieves an incremental update of robust consolidation $\hat{\bm \beta}$ when new incoming mini-batch data $X^+\in \mathbbm{R}^{p \times n}$ and $\bm y^+\in \mathbbm{R}^{n \times 1}$ are given. 


The details of algorithm \textit{ORLR} are shown in Algorithm \ref{algo:orlr}. In Line 1, the regression coefficients $\bm \beta^+$ of the new data is optimized by \textit{HRR} algorithm. The index of swapped estimate $s$ is generated in Line 2 by selecting the minimum value from $[m] \setminus \Psi$, which represents the set of estimates that are not included in dominating set $\Psi$. Since new estimates are appended to the tail of $\Pi$, the usage of minimum index ensures that the oldest corrupted estimate can be swapped out. In Line 3, the selected estimate $\bm \beta^{(s)}$ is removed from $\Pi$ while the new estimate $\bm \beta^+$ is appended to the tail of $\Pi$. Lines 4 through 6 re-consolidate all the estimates based on newly updated $\Pi$ in the same steps as the \textit{DRLR} algorithm. It is important to note that the distance vectors used in Lines 4 and 5 are also updated corresponding to the new $\Pi$. Also, the \textit{ORLR} algorithm can be invoked repeatedly for the incoming mini-batches, where the outputs $\Pi$ and $\Psi$ of the previous invocation can be used as the input of the next one.

Figure \ref{fig:ex_orlr} shows two cases for \textit{ORLR} algorithm. The first case shows the condition that the new estimate $\bm \beta^+ \in \Psi^+$ but not belongs to $\Psi^*$, and estimate $s$ is removed. Although the estimate $b$ is excluded from $\Psi^+$, the distance $d_{\bm \beta_*, p^+}$ can still be determined by the position of $b$. The error between $\bm \beta_*$ and $\hat{\bm \beta}$ can be increased, but still upper bounded by $d_{\bm \beta_*, p^+}$ and $d_{\hat{\bm \beta}, p^+}$. In the second case, $\bm \beta^+ \in \{\Psi^* \cap \Psi^+\}$. Because the farthest node $d$ in $\Psi^+$ is replaced by $\bm \beta^+$, the error between $\bm \beta_*$ and $\hat{\bm \beta}$ can be decreased, but it still upper bounded by the position of pivot batch $p^+$. Last but not least, the third case is $\bm \beta^+ \notin \{\Psi^* \cup \Psi^+\}$, which is not shown in Figure \ref{fig:ex_orlr}. The case is the same as Figure \ref{fig:ex_drlr} except a new estimate is added outside of $\Psi^*$ and $\Psi$. However, the change will not impact the result of $\hat{\bm \beta}$.

\begin{algorithm2e}[t]
	\small
	\DontPrintSemicolon 
	\KwIn{New incoming corrupted data $X^+ \in \mathbbm{R}^{p \times n}$ and $\bm y^+ \in \mathbbm{R}^{n \times 1}$. Previous $m$ mini-batch estimates $\Pi = \{\bm \beta^{(1)}, \dots,\bm \beta^{(m)} \}$ and their corresponding $\Psi$. }
	\KwOut{solution $\hat{\bm \beta}$, $\Pi$, $\Psi$}
	$\bm \beta^+ \leftarrow \hrr (X^+, \bm y^+)$ \\
	$s \leftarrow \min([m] \setminus \Psi)$ \tcp*{{\textit{Select removed estimate $s$} }}
	$\Pi^+ = \Pi \setminus \{\bm \beta^{(s)}\}\ \cup \{\bm \beta^+\}$ \\
	$p^+ = \argminA_i \sigma_{\tilde m} (\bm d^{(i)})$ \tcp*{{\textit{Optimize new pivot batch $p^+$} }}
	$\Psi^+ = \big\{\delta_k(\bm d^{(p^+)}) | 1\leq k \leq \tilde{m} \big\}$ \tcp*{{\textit{Find new dominating set $\Psi^+$} }}
	$\hat{\bm \beta} = \argminA_{\bm \beta} \big\{ \frac{1}{\tilde{m}} \sum_{i\in \Psi^+} \norm{\bm \beta^{(i)} - \bm \beta}_2 \big\}$ \tcp*{{\textit{Robust consolidation} }}
	\textbf{return} $\hat{\bm \beta}, \Pi^+, \Psi^+$
	\caption{{\sc Orlr Algorithm}}
	\label{algo:orlr}
\end{algorithm2e}


\section{Theoretical Recovery Analysis}\label{section:recovery_analysis}
In this section, the recovery properties of regression coefficients for the proposed distributed and online algorithms are presented in Theorem \ref{theorem:t2} and \ref{theorem:t3}, respectively. Before that, the recovery property of \textit{HRR} is presented in Theorem \ref{theorem:t1}.

To prove the theoretical recovery of regression coefficients for a single mini-batch, we require that the least-squares function satisfies the \textit{Subset Strong Convexity (SSC)} and \textit{Subset Strong Smoothness (SSS)} properties, which are defined as follows:
\begin{definition}[\textbf{SSC and SSS Properties}]
	The least squares function $f(\bm \beta) = \norm{\bm y_S - X_S^T \bm \beta}_2^2$ satisfies the $2\zeta_\gamma$-Subset Strong Convexity property and $2\kappa_\gamma$-Subset Strong Smoothness property if the following holds:
	\begin{equation} \label{eq:SSS_SSC}
	\begin{aligned}
	\zeta_\gamma I \preceq & \frac{1}{2} \triangledown^2 f_S(\bm \beta) \preceq \kappa_\gamma I \ \ \ \forc\ \forall S \in S_\gamma
	\end{aligned}
	\end{equation}
\end{definition}
Note that Equation \eqref{eq:SSS_SSC} is equivalent to:
\begin{equation} \label{eq:SSS_SSC_equiv}
\begin{aligned}
\zeta_\gamma \le \min_{S\in S_\gamma} \lambda_{min}(X_S X_S^T) \le \max_{S\in S_\gamma} \lambda_{max}(X_S X_S^T) \le \kappa_\gamma
\end{aligned}
\end{equation}

where $\lambda_{min}$ and $\lambda_{max}$ denote the smallest and largest eigenvalues of matrix $X$, respectively.

\begin{theorem}[\textbf{HRR Recovery Property}]\label{theorem:t1}
	Let $X^{(i)} \in \mathbbm{R}^{p \times n}$ be the given data matrix of the $i\nth$ mini batch and the corrupted response vector $\bm y^{(i)} = {\big[\bm X^{(i)}\big]}^T \bm \beta_* + \bm u^{(i)} + \bm \varepsilon^{(i)}$ with $\|\bm u^{(i)}\|_0 = \gamma n$. Let $\Sigma_0$ be an invertible matrix such that $\tilde{X}^{(i)} = \Sigma_0^{-1/2}X^{(i)}$; $f(\bm \beta) = \norm{\bm y^{(i)}_S - \tilde{X}^{(i)}_S \bm \beta}_2^2$ satisfies the SSC and SSS properties at level $\alpha$, $\gamma$ with $2 \zeta_{\alpha,\gamma}$ and $2 \kappa_{\alpha,\gamma}$.
	If the data satisfies $\frac{\varphi_{\alpha,\gamma}}{\sqrt{\zeta_{\alpha}}} < \frac{1}{2}$, after $t = \mathcal{O}\left(\log_{\frac{1}{\eta}}\frac{\norm{\bm u^{(i)}}_2}{\sqrt{n} \epsilon} \right)$ iterations, Algorithm \ref{algo:hrr} yields an $\epsilon$-accurate solution $\bm \beta^{(i)}_t$ with $\norm{\bm \beta_* - \bm \beta^{(i)}_t}_2 \leq \epsilon + \frac{C\norm{\bm \varepsilon^{(i)}}_2}{\sqrt{n}}$ for some $C>0$.
\end{theorem}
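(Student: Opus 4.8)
The plan is to treat each pass of Algorithm~\ref{algo:hrr} as one step of an alternating-minimization scheme and to establish a per-iteration contraction that is then unrolled. Given the current active set $Z_t$, Line~3 returns the \emph{exact} least-squares minimizer $\bm\beta^{t+1}$ restricted to $Z_t$, and Line~5 re-selects $Z_{t+1}=\mathcal H(\bm r^{t+1})$ by the heuristic hard thresholding of Definition~\ref{def:HHT}. Following the template used for hard-thresholding robust regression (Bhatia et al.\ and the \textit{RLHH} analysis), I would aim to show
\[
\norm{\bm\beta^{t+1}-\bm\beta_*}_2 \;\le\; \eta\,\norm{\bm\beta^{t}-\bm\beta_*}_2 \;+\; \frac{c}{\sqrt n}\,\norm{\bm\varepsilon^{(i)}}_2
\]
for some $\eta<1$ guaranteed by the hypothesis $\varphi_{\alpha,\gamma}/\sqrt{\zeta_\alpha}<\tfrac12$, and some absolute constant $c$.

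First I would write the normal equations for $\bm\beta^{t+1}$ on $Z_t$ and substitute the generative model in whitened coordinates, $\bm y^{(i)}=[\tilde X^{(i)}]^T\bm\beta_*+\bm u^{(i)}+\bm\varepsilon^{(i)}$, obtaining
$\bm\beta^{t+1}-\bm\beta_*=(\tilde X^{(i)}_{Z_t}[\tilde X^{(i)}_{Z_t}]^T)^{-1}\tilde X^{(i)}_{Z_t}\big(\bm u^{(i)}_{Z_t}+\bm\varepsilon^{(i)}_{Z_t}\big)$.
The SSC property bounds $\lambda_{\min}(\tilde X^{(i)}_{Z_t}[\tilde X^{(i)}_{Z_t}]^T)\ge\zeta_{\alpha,\gamma}$, and the SSS property controls the term driven by $\bm\varepsilon^{(i)}$; the corruption term is supported only on $Z_t\cap supp(\bm u^{(i)})$, i.e.\ on the corrupted points that ``leak'' into the active set. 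Reducing the theorem to a contraction therefore amounts to bounding the leaked-corruption energy $\norm{\tilde X^{(i)}_{Z_t}\bm u^{(i)}_{Z_t}}_2$ in terms of $\norm{\bm\beta^t-\bm\beta_*}_2$.

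The crux is the analysis of the heuristic thresholding step, since $h(\cdot)$ adapts to the data rather than using a known corruption fraction. Here I would exploit the two structural properties asserted for $h(\cdot)$ in Section~\ref{section:single_regression}: (a) the maximality in Equation~\eqref{eq:tau_opt} yields high recall, giving $\abs{Z^{(i)}_*\cap Z_{t+1}}\ge|Z_{t+1}|-n/2$, so only a bounded fraction of the active set can be corrupted; and (b) the residual constraint $r^{(i)}_{\varphi(\tau)}\le 2\tau\,r^{(i)}_{\varphi(\tau_o)}/\tau_o$, together with the choice of $\tau_o$ in Equation~\eqref{eq:tau_o} (whose squared residual matches $\norm{\bm r^{(i)}_{Z_{\tau'}}}_2^2/\tau'$ for $\tau'$ below the true threshold), caps the magnitudes of the admitted residuals so that $\norm{\bm r^{t+1}_{Z_{t+1}}}_2$ cannot exceed a constant multiple of $\norm{\bm r^{t+1}_{Z_{t+1}\cap Z^{(i)}_*}}_2$. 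Chaining (a), (b), the optimality of $\bm\beta^{t+1}$ on $Z_t$ (so its residual on any set of comparable size is no larger than that of $\bm\beta_*$), and the SSC/SSS inequalities, I would bound the leaked corruption by $\varphi_{\alpha,\gamma}\,\norm{\bm\beta^t-\bm\beta_*}_2$ up to a noise term, which gives $\eta=2\varphi_{\alpha,\gamma}/\sqrt{\zeta_\alpha}<1$.

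Finally I would bound the initial iterate: with $Z_0=[n]$, $\bm\beta^0$ is the plain OLS estimate, so $\norm{\bm\beta^0-\bm\beta_*}_2\lesssim \norm{\bm u^{(i)}}_2/\sqrt{\zeta}+ (\text{noise})$. Unrolling the contraction yields $\norm{\bm\beta^t-\bm\beta_*}_2\le \eta^t\norm{\bm\beta^0-\bm\beta_*}_2+\frac{1}{1-\eta}\cdot\frac{c\norm{\bm\varepsilon^{(i)}}_2}{\sqrt n}$; taking $t=\mathcal O\!\big(\log_{1/\eta}(\norm{\bm u^{(i)}}_2/(\sqrt n\,\epsilon))\big)$ drives the geometric term below $\epsilon$, and absorbing $1/(1-\eta)$ and the noise constants into $C$ gives the claimed bound $\norm{\bm\beta_*-\bm\beta^{(i)}_t}_2\le\epsilon+C\norm{\bm\varepsilon^{(i)}}_2/\sqrt n$. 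I expect step~(3) to be the main obstacle: making rigorous that the data-adaptive threshold produced by $h(\cdot)$ simultaneously enjoys the recall guarantee $\abs{Z^{(i)}_*\cap Z_{t+1}}\ge|Z_{t+1}|-n/2$ and the residual-magnitude cap — these two facts together are precisely what replaces the ``known $\gamma$'' assumption of earlier hard-thresholding analyses, and controlling the interplay among $\tau_o$, the true threshold $\tau_*$, and the empirical order statistics of $\bm r^{(i)}$ requires care.
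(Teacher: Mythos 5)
Your skeleton is the right one and is essentially the standard hard-thresholding analysis (TORRENT-style alternating minimization: exact least squares on the current active set, a per-iteration contraction with rate governed by $\varphi_{\alpha,\gamma}/\sqrt{\zeta_{\alpha}}$, unrolled from the OLS initializer $Z_0=[n]$), which is also the route the paper's argument follows. The decomposition $\bm\beta^{t+1}-\bm\beta_*=(\tilde X_{Z_t}\tilde X_{Z_t}^T)^{-1}\tilde X_{Z_t}(\bm u_{Z_t}+\bm\varepsilon_{Z_t})$, the use of SSC to control the inverse, and the final unrolling and choice of $t$ are all correct and routine.

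The genuine gap is precisely the step you flag and then defer: the leakage bound $\norm{\tilde X_{Z_t}\bm u_{Z_t}}_2\le\varphi_{\alpha,\gamma}\norm{\bm\beta^{t}-\bm\beta_*}_2+c\norm{\bm\varepsilon^{(i)}}_2$ is not a technicality requiring ``care''; it is the entire content of the theorem, and it is exactly where the adaptive threshold $h(\cdot)$ departs from the known-$\gamma$ analyses you cite. Two things are missing. First, your property (b) rests on $r^{(i)}_{\varphi(\tau_o)}$ being comparable to a typical uncorrupted residual, which requires showing that the $\tau_o$ returned by Equation \eqref{eq:tau_o} lands below the true threshold $\tau_*$ (i.e., that the minimizer of $\abs{(r^{(i)}_{\varphi(\tau)})^2-\norm{\bm r^{(i)}_{Z_{\tau'}}}_2^2/\tau'}$ over $\tau\in[\lceil n/2\rceil,n]$ is not attained at a corrupted index); the paper only motivates this heuristically, so a rigorous proof must either derive it from the SSC/SSS hypotheses and the generative model or state it as an additional assumption. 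Second, the recall guarantee $\abs{Z^{(i)}_*\cap Z_{t+1}}\ge\abs{Z_{t+1}}-n/2$ permits up to $n/2$ corrupted indices to survive thresholding, so a counting argument alone cannot yield a contraction: you must combine the cardinality bound with the residual-magnitude cap to get $\norm{\bm u_{Z_{t+1}}}_2\lesssim\norm{\tilde X_{Z_{t+1}}^T(\bm\beta^{t+1}-\bm\beta_*)}_2+\norm{\bm\varepsilon^{(i)}}_2$, then convert to coefficient space via SSS. The constant produced here is what defines $\varphi_{\alpha,\gamma}$ (the paper never defines it), and you must check it is small enough that $\eta=2\varphi_{\alpha,\gamma}/\sqrt{\zeta_{\alpha}}<1$ is achievable given that the constraint in Equation \eqref{eq:tau_opt} only caps admitted residuals by the factor $2\tau/\tau_o\le 4$. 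A minor further point: Algorithm \ref{algo:hrr} runs on $X^{(i)}$ while the SSC/SSS hypotheses are stated for the whitened $\tilde X^{(i)}=\Sigma_0^{-1/2}X^{(i)}$, so your normal-equations identity should be written consistently in one coordinate system before the eigenvalue bounds are applied.
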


The proof of Theorem \ref{theorem:t1} can be found in the supplementary material\footnote{\url{https://goo.gl/HRwZsp}}. The theoretical analyses of regression coefficients recovery for Algorithm \ref{algo:drlr} and \ref{algo:orlr} are shown in the following.

\begin{lemma} \label{lemma:l1}
Suppose Algorithm \ref{algo:hrr} yields an $\epsilon$-accurate solution $\hat{\bm \beta}$ with corruption ratio $\gamma_0$, and $m$ mini-batches of data have a corruption ratio less than $\gamma_0/2$, more than $\floor{\frac{m}{2}}+1$ batches can yield an $\epsilon$-accurate solution by Algorithm \ref{algo:hrr}.
\end{lemma}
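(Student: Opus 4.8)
The plan is to combine the single-batch recovery guarantee of Theorem~\ref{theorem:t1} with an elementary counting argument on how the corrupted points can be spread across the $m$ mini-batches. First, I would record the per-batch criterion implied by Theorem~\ref{theorem:t1}: \textsc{Hrr} returns an $\epsilon$-accurate estimate on a mini-batch whenever that batch's corruption level meets the SSC/SSS condition $\varphi_{\alpha,\gamma}/\sqrt{\zeta_{\alpha}} < \tfrac12$, and the hypothesis that \textsc{Hrr} is $\epsilon$-accurate at corruption ratio $\gamma_0$ says exactly that this condition holds at level $\gamma_0$. I would then argue monotonicity --- reducing the number of corrupted entries only shrinks the family of admissible support sets $S_\gamma$, so $\zeta_\alpha$ cannot decrease and $\varphi_{\alpha,\gamma}$ cannot increase --- so the condition, and hence the bound $\|\bm\beta_* - \bm\beta^{(i)}_t\|_2 \le \epsilon + C\|\bm\varepsilon^{(i)}\|_2/\sqrt n$, is inherited by every batch with corruption ratio $\gamma^{(i)} < \gamma_0$; call such a batch \emph{good}.

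Next comes the counting step. Writing $\gamma^{(i)}$ for the corruption ratio of batch $i$, the assumption that the $m$ batches jointly have corruption ratio below $\gamma_0/2$ becomes $\sum_{i=1}^m \gamma^{(i)} n < \tfrac{\gamma_0}{2} m n$, i.e.\ $\sum_i \gamma^{(i)} < \tfrac{\gamma_0}{2} m$. Each bad batch ($\gamma^{(i)} \ge \gamma_0$) contributes at least $\gamma_0$ to the left-hand side, so if $k$ denotes the number of bad batches then $k\gamma_0 < \tfrac{\gamma_0}{2} m$, giving $k < m/2$; since $k$ is an integer, $k \le \ceil{m/2} - 1$, so the number of good batches satisfies $m - k \ge \floor{m/2} + 1 = \tilde m$. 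Each such good batch yields an $\epsilon$-accurate solution by the first step, which is the claim --- and this is exactly the majority threshold that will later guarantee the dominating set $\Psi$ of Algorithm~\ref{algo:drlr} contains an accurate batch.

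I expect the main obstacle to be the monotonicity argument in the first step: cleanly transferring the \textsc{Hrr} recovery guarantee from corruption ratio $\gamma_0$ to all smaller ratios requires care with how the SSC/SSS constants $\zeta_\alpha$, $\kappa_{\alpha,\gamma}$, $\varphi_{\alpha,\gamma}$ depend on $\gamma$, and with keeping the dense-noise term $C\|\bm\varepsilon^{(i)}\|_2/\sqrt n$ uniform across batches (e.g.\ by bounding it via a worst-case $\|\bm\varepsilon^{(i)}\|_2$). The counting argument in the second step is routine.
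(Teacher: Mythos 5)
Your proof is correct and follows essentially the same route as the paper's: a pigeonhole/counting argument showing that batches with corruption ratio at or above $\gamma_0$ cannot number $m/2$ or more when the aggregate corruption is below $\gamma_0/2$, hence at least $\floor{m/2}+1$ batches fall within \textsc{Hrr}'s recovery regime. Your first step (monotonicity of the SSC/SSS condition in $\gamma$) makes explicit something the paper leaves implicit, but the core argument is the same.
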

\begin{proof}
Let $\Psi_*$ denote the set of mini-batches that yield $\epsilon$-accurate solutions and $\gamma_i$ represent the corruption ratio for the $i\nth$ mini-batch. Then we have:

\resizebox{.45\textwidth}{!}{
	\begin{minipage}{\linewidth}
		\begin{equation*}
		\begin{aligned}		
		\sum_{i\in [m]\setminus \Psi_*} \gamma_i n \stackrel{(a)}{\le}& \sum_i^m \gamma_i n = \frac{\gamma_0}{2} \cdot m\cdot n \\
		(\gamma_0 n + 1)(m-\abs{\Psi_*}) \stackrel{(b)}{\le}& \frac{\gamma_0}{2} \cdot m\cdot n \\
		\end{aligned}
		\end{equation*}
\end{minipage}}

Inequality \textit{(a)} is based on $\sum_i^m \gamma_i n = \sum_{i\in \Psi_*} \gamma_i n + \sum_{i\in [m]\setminus \Psi_*} \gamma_i n$. And inequality \textit{(a)} follows each corrupted mini-batch that contains at least $\gamma_0n + 1$ corrupted samples. Applying simple algebra steps, we have

\resizebox{.45\textwidth}{!}{
	\begin{minipage}{\linewidth}
		\begin{equation*}
		\begin{aligned}		
		\abs{\Psi_*} \ge m-\frac{\frac{\gamma_0}{2} mn}{\gamma_0n+1} \ge m-\frac{\frac{\gamma_0}{2} mn}{\gamma_0n} \ge \frac{m}{2}\\		
		\end{aligned}
		\end{equation*}
\end{minipage}}

Since $\abs{\Psi_*}$ is an integer, then we have $\abs{\Psi_*} \ge \floor{\frac{m}{2}} + 1$.

\end{proof}

\begin{lemma} \label{lemma:l2}
Given a set of mini-batch estimates $\{\bm \beta^{(1)}, \dots,\bm \beta^{(m)} \}$ with $\tilde{m} = \floor{m/2} + 1$, defining the $p\nth$ batch as its pivot batch, then we have $\sigma_{\tilde{m}}(\bm d^{(p)}) \le 2\epsilon$.
\end{lemma}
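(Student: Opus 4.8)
The plan is to combine the counting result of Lemma~\ref{lemma:l1} with a triangle-inequality argument over the accurate batches, and then read off the bound directly from the definition of the pivot batch. Write $\Psi_*$ for the set of mini-batches whose \textit{HRR} estimate is $\epsilon$-accurate, i.e. $\norm{\bm \beta^{(i)} - \bm \beta_*}_2 \le \epsilon$ for every $i \in \Psi_*$. Lemma~\ref{lemma:l1} already gives $\abs{\Psi_*} \ge \floor{m/2} + 1 = \tilde m$ under the stated corruption-ratio hypothesis, so a majority of the batch estimates are clustered around the common point $\bm \beta_*$.

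The first step is to note that any two accurate batches are close: for $i, j \in \Psi_*$, the triangle inequality yields
\begin{equation*}
d_{i,j} = \norm{\bm \beta^{(i)} - \bm \beta^{(j)}}_2 \le \norm{\bm \beta^{(i)} - \bm \beta_*}_2 + \norm{\bm \beta_* - \bm \beta^{(j)}}_2 \le 2\epsilon .
\end{equation*}
The second step is to fix an arbitrary $i \in \Psi_*$ and examine its distance vector $\bm d^{(i)} \in \mathbbm{R}^{m\times 1}$. For each $j \in \Psi_*$ the coordinate $\bm d^{(i)}_j = d_{i,j} \le 2\epsilon$ (this count includes the zero entry $d_{i,i}$), so at least $\abs{\Psi_*} \ge \tilde m$ coordinates of $\bm d^{(i)}$ do not exceed $2\epsilon$. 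Since $\sigma_{\tilde m}(\cdot)$ is by definition the $\tilde m$-th smallest value, having $\ge \tilde m$ entries bounded by $2\epsilon$ forces $\sigma_{\tilde m}(\bm d^{(i)}) \le 2\epsilon$ for every $i \in \Psi_*$. The final step invokes the definition of the pivot batch, $p = \argminA_i \sigma_{\tilde m}(\bm d^{(i)})$: choosing any $i \in \Psi_*$ (nonempty by Lemma~\ref{lemma:l1}) gives $\sigma_{\tilde m}(\bm d^{(p)}) \le \sigma_{\tilde m}(\bm d^{(i)}) \le 2\epsilon$, which is the claim.

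I do not expect a serious obstacle; the only place needing care is the bookkeeping of what ``$\epsilon$-accurate'' means. Theorem~\ref{theorem:t1} actually delivers $\norm{\bm \beta_* - \bm \beta^{(i)}_t}_2 \le \epsilon + C\norm{\bm \varepsilon^{(i)}}_2/\sqrt{n}$, so the pairwise bound is strictly $2\epsilon + C(\norm{\bm \varepsilon^{(i)}}_2 + \norm{\bm \varepsilon^{(j)}}_2)/\sqrt{n}$; the clean ``$2\epsilon$'' in the statement presumes the dense-noise term is absorbed into $\epsilon$, and I would make that convention explicit at the outset. The other implicit point is that Lemma~\ref{lemma:l2} silently inherits the hypothesis of Lemma~\ref{lemma:l1} (per-batch corruption below $\gamma_0/2$), since otherwise $\Psi_*$ could fail to have $\tilde m$ elements and the conclusion would break down; I would state that dependency when setting up the proof.
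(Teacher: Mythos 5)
Your proof is correct and follows essentially the same route as the paper's: triangle inequality through $\bm \beta_*$ to get pairwise distances $\le 2\epsilon$ among the $\epsilon$-accurate batches, the counting bound $\abs{\Psi_*} \ge \tilde m$ from Lemma~\ref{lemma:l1} to conclude $\sigma_{\tilde m}(\bm d^{(k)}) \le 2\epsilon$ for an accurate batch $k$, and the $\argmin$ definition of the pivot to transfer the bound to $p$. Your added remarks about absorbing the dense-noise term into $\epsilon$ and about the lemma silently inheriting Lemma~\ref{lemma:l1}'s corruption hypothesis are valid observations that the paper leaves implicit.
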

\begin{proof}
	Suppose $k\nth$ mini-batch is in the uncorrupted set $\Psi_*$, we have $\norm{\bm \beta^{(k)} - \bm \beta_*}_2 \le \epsilon$. Similarly, for $\forall i \in \Psi_*$, we have $\norm{\bm \beta^{(i)} - \bm \beta_*}_2 \le \epsilon$. According to the triangle inequality, for $\forall i \in \Psi_*$, it satisfies:
	
	\resizebox{.45\textwidth}{!}{
		\begin{minipage}{\linewidth}
			\begin{equation*}
			\begin{aligned}		
			\norm{\bm \beta^{(i)} - \bm \beta^{(k)}}_2 - \norm{\bm \beta^{(k)} - \bm \beta_*}_2 \le& \norm{\bm \beta^{(i)} - \bm \beta_*}_2 \le \epsilon  \\
			\norm{\bm \beta^{(i)} - \bm \beta^{(k)}}_2 \le& 2\epsilon
			\end{aligned}
			\end{equation*}
	\end{minipage}}

Since $\abs{\Psi_*} \ge \tilde{m}$, we have $\sigma_{\tilde{m}}(\bm d^{(k)}) \le 2\epsilon$. According to the definition of pivot batch $p = \argminA_i \sigma_{\tilde m} (\bm d^{(i)})$, we have $\sigma_{\tilde{m}}(\bm d^{(p)}) \le \sigma_{\tilde{m}}(\bm d^{(k)}) \le 2\epsilon$.
\end{proof}
\begin{theorem}[\textbf{DRLR Recovery Property}] \label{theorem:t2}
	Given data samples in $m$ mini batches $\{ (X^{(1)}, \bm y^{(1)}), \dots, (X^{(m)}, \bm y^{(m)}) \}$ with a corruption ratio of $\gamma_0/2$, Algorithm \ref{algo:drlr} yields an $\epsilon$-accurate solution $\hat{\bm \beta}$ with $\norm{\hat{\bm \beta} - \bm \beta_*}_2 \leq 5\epsilon$.
\end{theorem}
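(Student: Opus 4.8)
The plan is to reduce the statement to the two lemmas plus a single triangle-inequality decomposition around the pivot batch, exactly as suggested by Figure \ref{fig:ex_drlr}:
$\norm{\hat{\bm\beta}-\bm\beta_*}_2 \le \norm{\hat{\bm\beta}-\bm\beta^{(p)}}_2 + \norm{\bm\beta^{(p)}-\bm\beta_*}_2$,
and then to show the first leg is at most $2\epsilon$ and the second at most $3\epsilon$. Throughout I would let $\epsilon$ stand for the full per-batch error $\epsilon + C\norm{\bm\varepsilon^{(i)}}_2/\sqrt n$ delivered by Theorem \ref{theorem:t1}, write $\Psi_* = \{i\in[m] : \norm{\bm\beta^{(i)}-\bm\beta_*}_2 \le \epsilon\}$ for the set of $\epsilon$-accurate batches, and recall $\tilde m = \floor{m/2}+1$.

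\emph{Step 1 (a common good batch exists).} Since the overall corruption ratio is $\gamma_0/2$ while \textit{HRR} is $\epsilon$-accurate at ratio $\gamma_0$, Lemma \ref{lemma:l1} gives $\abs{\Psi_*}\ge\tilde m$. The dominating set has $\abs{\Psi}=\tilde m$ as well, and both sets lie in $[m]$, so by inclusion--exclusion $\abs{\Psi\cap\Psi_*}\ge 2\tilde m - m \ge 1$; fix some uncorrupted batch $b\in\Psi\cap\Psi_*$.

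\emph{Step 2 (pivot close to $\bm\beta_*$).} Because $b\in\Psi$, batch $b$ is one of the $\tilde m$ batches nearest the pivot, hence $d_{p,b}\le\sigma_{\tilde m}(\bm d^{(p)})\le 2\epsilon$ by Lemma \ref{lemma:l2}; because $b\in\Psi_*$, $\norm{\bm\beta^{(b)}-\bm\beta_*}_2\le\epsilon$. The triangle inequality then gives $\norm{\bm\beta^{(p)}-\bm\beta_*}_2 \le d_{p,b} + \norm{\bm\beta^{(b)}-\bm\beta_*}_2 \le 3\epsilon$, which is the leg $d_{\bm\beta_*,p}$ in the figure. \emph{Step 3 (consolidation close to pivot).} Every batch $i\in\Psi$ satisfies $\norm{\bm\beta^{(i)}-\bm\beta^{(p)}}_2 = d_{p,i}\le\sigma_{\tilde m}(\bm d^{(p)})\le 2\epsilon$, so all the estimates being consolidated lie in the Euclidean ball $B(\bm\beta^{(p)},2\epsilon)$. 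The robust consolidation $\hat{\bm\beta}=\argmin_{\bm\beta}\sum_{i\in\Psi}\norm{\bm\beta^{(i)}-\bm\beta}_2$ is a geometric-median (Weber) point, and such a minimizer lies in the convex hull of $\{\bm\beta^{(i)} : i\in\Psi\}$ --- equivalently, projecting any candidate onto the convex set $B(\bm\beta^{(p)},2\epsilon)$ does not increase any of the distances $\norm{\bm\beta^{(i)}-\bm\beta}_2$ --- so $\hat{\bm\beta}\in B(\bm\beta^{(p)},2\epsilon)$, i.e.\ $\norm{\hat{\bm\beta}-\bm\beta^{(p)}}_2\le 2\epsilon$. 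Summing the two legs yields $\norm{\hat{\bm\beta}-\bm\beta_*}_2\le 2\epsilon+3\epsilon=5\epsilon$.

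The step I expect to be the crux is Step 3: one must ensure the consolidated estimate does not escape the $2\epsilon$-ball around the pivot even though up to roughly half of the $\tilde m$ averaged batches may themselves be adversarially corrupted. The clean resolution is the convex-hull / nonexpansive-projection argument above, which works precisely because membership in the dominating set $\Psi$ \emph{already} forces every consolidated estimate to within $2\epsilon$ of $\bm\beta^{(p)}$ (Lemma \ref{lemma:l2}), so no ``fraction-of-corruption'' breakdown property of the geometric median is needed. The remaining care is bookkeeping: absorbing the dense-noise term of Theorem \ref{theorem:t1} into $\epsilon$, and the integer rounding in Lemma \ref{lemma:l1} that makes the relevant count $\floor{m/2}+1$ rather than $m/2$ (which is what makes $\Psi\cap\Psi_*$ nonempty in Step 1).
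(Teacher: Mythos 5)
Your proposal is correct and reaches the stated $5\epsilon$ bound, but the final step takes a genuinely different route from the paper. Steps 1 and 2 coincide with the paper's argument: both invoke Lemma \ref{lemma:l1} to get $\abs{\Psi_*}\ge\tilde m$ and hence $\Psi\cap\Psi_*\neq\emptyset$, and Lemma \ref{lemma:l2} to control distances to the pivot. Where you diverge is in bounding the consolidation error. The paper decomposes around a good batch $k\in\Psi\cap\Psi_*$ and uses the optimality of $\hat{\bm\beta}$ to write $\norm{\hat{\bm\beta}-\bm\beta^{(k)}}_2\le\frac{2}{\tilde m}\sum_{i\in\Psi}\norm{\bm\beta^{(k)}-\bm\beta^{(i)}}_2$, then bounds this using the claim that all pairwise distances within $\Psi$ are at most $2\epsilon$. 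You instead decompose around the pivot itself and argue that the Weber point $\hat{\bm\beta}$ lies in the convex hull of $\{\bm\beta^{(i)}:i\in\Psi\}\subseteq B(\bm\beta^{(p)},2\epsilon)$, giving $\norm{\hat{\bm\beta}-\bm\beta^{(p)}}_2\le 2\epsilon$ directly, and pay $3\epsilon$ for the pivot-to-truth leg. Your route is arguably the cleaner one: Lemma \ref{lemma:l2} only guarantees that each member of $\Psi$ is within $2\epsilon$ of the pivot, so the paper's assertion that $\norm{\bm\beta^{(i)}-\bm\beta^{(j)}}_2\le2\epsilon$ for all $i,j\in\Psi$ (rather than the $4\epsilon$ obtained by the triangle inequality through $p$) is not justified as written, whereas your convex-hull argument never needs pairwise bounds at all. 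The one fact you should state explicitly is the containment lemma you rely on: for any closed convex $C$ containing all $\bm\beta^{(i)}$ and any $x\notin C$, the projection characterization gives $\norm{x-\bm\beta^{(i)}}_2^2\ge\norm{x-P_C(x)}_2^2+\norm{P_C(x)-\bm\beta^{(i)}}_2^2$, so projecting strictly decreases every term of the objective and every minimizer of the robust consolidation must lie in $C$; with that spelled out, the proof is complete.
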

\begin{proof}
	Let $\Psi_*$ denotes the set of mini-batches that yield $\epsilon$-accurate solutions. According to Lemma \ref{lemma:l1}, we have $\abs{\Psi_*} \ge \floor{\frac{m}{2}} + 1$. Because of Lemma \ref{lemma:l2}, we have $\forall i \in [1,\tilde{m}]$, $\sigma_{i}(\bm d^{(p)}) \le 2\epsilon$, where $p$ is the index of pivot batch and $\tilde{m} = \floor{m/2} + 1$. Using $\Psi = \big\{\delta_k(\bm d^{(p)}) | 1\leq k \leq \tilde{m} \big\}$ defined in Algorithm \ref{algo:drlr}, we have $\forall i,j \in \Psi$, $\norm{\bm \beta^{(i)} - \bm \beta^{(j)}}_2 \le 2\epsilon$. As $\abs{\Psi_*} \ge \floor{\frac{m}{2}} + 1$, we have $\abs{\Psi_* \cap \Psi} \ge 1$. For any $k \in \{\Psi_* \cap \Psi\}$, we have the following two properties of the $k\nth$ mini batch: 1) $\forall i \in \Psi$, $\norm{\bm \beta^{(k)} - \bm \beta^{(i)}}_2 \le 2\epsilon$; and 2) $\norm{\bm \beta^{(k)} - \bm \beta_{*}}_2 \le \epsilon$. Applying these properties, we get the error bound of $\norm{\hat{\bm \beta} - \bm \beta_*}_2$ as follows.
	
	\resizebox{.45\textwidth}{!}{
		\begin{minipage}{\linewidth}
			\begin{equation*}
			\begin{aligned}		
			\norm{\hat{\bm \beta} - \bm \beta_*}_2 =& \norm{\hat{\bm \beta} -\bm \beta^{(k)} + \bm \beta^{(k)}- \bm \beta_*}_2 \\
			\stackrel{(a)}{\le}& \norm{\hat{\bm \beta} -\bm \beta^{(k)}}_2 + \norm{\bm \beta^{(k)}- \bm \beta_*}_2\\
			\stackrel{(b)}{\le}& \frac{1}{\tilde{m}} \sum_{i\in \Psi}\norm{\hat{\bm \beta} -\bm \beta^{(i)}}_2 + \frac{1}{\tilde{m}} \sum_{i\in \Psi}\norm{\bm \beta^{(i)} -\bm \beta^{(k)}}_2 + \epsilon \\
			\stackrel{(c)}{\le}& \frac{1}{\tilde{m}} \sum_{i\in \Psi}\norm{\bm \beta^{(k)} -\bm \beta^{(i)}}_2 + 3\epsilon \le 5\epsilon
			\end{aligned}
			\end{equation*}
	\end{minipage}}
	
	Inequality \textit{(a)} is based on the triangle inequality of the $L_2$ norm, and inequality \textit{(b)} follows $\norm{\hat{\bm \beta} -\bm \beta^{(k)}}_2 = \frac{1}{T} \sum_{i\in \Psi}\norm{\hat{\bm \beta} -\bm \beta^{(i)} + \bm \beta^{(i)} -\bm \beta^{(k)}}_2$. Inequity \textit{(c)} follows the definition of $\hat{\bm \beta}$, which makes $\sum_{i\in \Psi}\norm{\hat{\bm \beta} -\bm \beta^{(i)}} \le \sum_{i\in \Psi}\norm{\bm \beta^{(k)} -\bm \beta^{(i)}}$. 
\end{proof}
\begin{theorem}[\textbf{ORLR Recovery Property}]\label{theorem:t3} 
	Given $m$ mini-batch estimates of regression coefficients $\Pi = \{\bm \beta^{(1)}, \dots,\bm \beta^{(m)} \}$, their corresponding dominating set $\Psi$, and incoming corrupted data $X^+ \in \mathbbm{R}^{p \times n}$ and $\bm y^+ \in \mathbbm{R}^{n \times 1}$, Algorithm \ref{algo:orlr} yields an $\epsilon$-accurate solution $\hat{\bm \beta}$ with $\norm{\hat{\bm \beta} - \bm \beta_*}_2 \leq 5\epsilon + \frac{4\epsilon}{\tilde{m}}$.
\end{theorem}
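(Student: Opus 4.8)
The plan is to run the argument behind Theorem~\ref{theorem:t2} a second time, now on the post-swap collection $\Pi^+ = (\Pi\setminus\{\bm\beta^{(s)}\})\cup\{\bm\beta^+\}$, together with one refinement that absorbs the possibility that the freshly inserted — and possibly arbitrarily corrupted — estimate $\bm\beta^+$ is pulled into the recomputed dominating set $\Psi^+$. The extra additive term $\frac{4\epsilon}{\tilde{m}}$ will be exactly the price of that single possibly-bad slot in $\Psi^+$: it can enlarge one summand of the averaged objective from $2\epsilon$ to $4\epsilon$, i.e.\ by $2\epsilon$, which becomes $\frac{2\epsilon}{\tilde{m}}$ after averaging over the $\tilde m$ members of $\Psi^+$ and then $\frac{4\epsilon}{\tilde{m}}$ after the minimality step of $\hat{\bm\beta}$ doubles it.

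First I would re-derive, for $\Pi^+$, the two structural facts used for \textit{DRLR}. Since $\Pi^+$ again consists of $m$ mini-batches and the corruption hypothesis carried over from Theorem~\ref{theorem:t2} is taken to hold for it, Lemma~\ref{lemma:l1} gives $\abs{\Psi_*^+}\ge\floor{m/2}+1=\tilde{m}$, where $\Psi_*^+$ denotes the set of $\epsilon$-accurate batches in $\Pi^+$; invoking Lemma~\ref{lemma:l1} on $\Pi^+$ \emph{as a whole} is what lets us avoid tracking how the deletion of $\bm\beta^{(s)}$ (which, since $s=\min([m]\setminus\Psi)$, is only guaranteed to lie outside the \emph{old} $\Psi$ and need not be corrupted) changes the count. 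Lemma~\ref{lemma:l2} applied to $\Pi^+$ with pivot $p^+$ then gives $\sigma_{\tilde{m}}(\bm d^{(p^+)})\le 2\epsilon$, so since $p^+\in\Psi^+$ and every index of $\Psi^+$ is among the $\tilde m$ closest to $p^+$, we have $\norm{\bm\beta^{(p^+)}-\bm\beta^{(i)}}_2\le 2\epsilon$ for all $i\in\Psi^+$. Because $\abs{\Psi_*^+}\ge\tilde{m}$ and $\abs{\Psi^+}=\tilde{m}$ are both subsets of the $m$ indices of $\Pi^+$, we get $\abs{\Psi_*^+\cap\Psi^+}\ge 2\tilde{m}-m\ge 1$; moreover, when $\bm\beta^+$ is corrupted it lies outside $\Psi_*^+$, so this intersection in fact contains the index $k$ of an \emph{old} $\epsilon$-accurate batch with $\norm{\bm\beta^{(k)}-\bm\beta_*}_2\le\epsilon$.

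Next I would bound the distances from this anchor to $\Psi^+$. For every old batch $i\in\Psi^+$ the same reasoning as in the proof of Theorem~\ref{theorem:t2} gives $\norm{\bm\beta^{(k)}-\bm\beta^{(i)}}_2\le 2\epsilon$, whereas for the unique new index in $\Psi^+$ (only one estimate was inserted, so $\Psi^+$ contains at most one such index) the triangle inequality through $p^+$ gives only $\norm{\bm\beta^{(k)}-\bm\beta^+}_2\le\norm{\bm\beta^{(k)}-\bm\beta^{(p^+)}}_2+\norm{\bm\beta^{(p^+)}-\bm\beta^+}_2\le 4\epsilon$; hence $\frac{1}{\tilde{m}}\sum_{i\in\Psi^+}\norm{\bm\beta^{(k)}-\bm\beta^{(i)}}_2\le 2\epsilon+\frac{2\epsilon}{\tilde{m}}$. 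Then, exactly as in Theorem~\ref{theorem:t2}, the triangle inequality together with the defining minimality of $\hat{\bm\beta}$ over $\Psi^+$ (so that $\sum_{i\in\Psi^+}\norm{\hat{\bm\beta}-\bm\beta^{(i)}}_2\le\sum_{i\in\Psi^+}\norm{\bm\beta^{(k)}-\bm\beta^{(i)}}_2$) yields
\begin{equation*}
\norm{\hat{\bm\beta}-\bm\beta_*}_2 \le \epsilon + \norm{\hat{\bm\beta}-\bm\beta^{(k)}}_2 \le \epsilon + \frac{2}{\tilde{m}}\sum_{i\in\Psi^+}\norm{\bm\beta^{(k)}-\bm\beta^{(i)}}_2 \le \epsilon + 2\Big(2\epsilon+\frac{2\epsilon}{\tilde{m}}\Big) = 5\epsilon + \frac{4\epsilon}{\tilde{m}}.
\end{equation*}
To close the argument I would dispatch the remaining cases sketched around Figure~\ref{fig:ex_orlr}: if $\bm\beta^+\notin\Psi^+$ then $\Psi^+$ consists only of old batches and the bound $5\epsilon$ of Theorem~\ref{theorem:t2} applies a fortiori; and if $\bm\beta^+\in\Psi^+$ is itself $\epsilon$-accurate, it can be taken as the anchor $k$ (or it only improves the averaged objective), so again the bound holds with room to spare. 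Thus the corrupted-but-selected case analyzed above is the worst one.

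The main obstacle is the combinatorial bookkeeping around $\Psi_*^+$ and the \emph{recomputed} $\Psi^+$: one must guarantee that, even though $\bm\beta^{(s)}$ may have been uncorrupted and $\bm\beta^+$ may be corrupted and may evict a good batch from the dominating set, a majority of $\epsilon$-accurate batches still survives \emph{and} still intersects $\Psi^+$. This is resolved by re-invoking Lemma~\ref{lemma:l1} on $\Pi^+$ wholesale rather than incrementally, and by the observation that at most one new index can enter $\Psi^+$, which confines the degradation relative to the \textit{DRLR} bound to a single summand and hence to the $\mathcal{O}(\epsilon/\tilde{m})$ correction. A secondary subtlety is reproducing the $\norm{\bm\beta^{(k)}-\bm\beta^{(i)}}_2\le 2\epsilon$ estimate of Theorem~\ref{theorem:t2} for the old members of $\Psi^+$, which is inherited directly from that proof.
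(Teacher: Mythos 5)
Your argument is essentially the paper's own proof: you anchor at an $\epsilon$-accurate batch $k$ lying in the dominating set, use the minimality of $\hat{\bm \beta}$ to obtain $\norm{\hat{\bm \beta}-\bm \beta^{(k)}}_2 \le \frac{2}{\tilde{m}}\sum_{i\in \Psi^+}\norm{\bm \beta^{(k)}-\bm \beta^{(i)}}_2$, charge the single possibly-new index of $\Psi^+$ a $4\epsilon$ via a triangle inequality through the pivot against $2\epsilon$ for the remaining $\tilde{m}-1$ old members, and split on whether $\bm \beta^+\in\Psi^+$, arriving at the same $5\epsilon+\frac{4\epsilon}{\tilde{m}}$ arithmetic. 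The only cosmetic differences are that you re-invoke Lemmas \ref{lemma:l1} and \ref{lemma:l2} on $\Pi^+$ wholesale and route the extra triangle inequality through the new pivot $p^+$, whereas the paper picks $k\in\Psi\cap\Psi_*$ and uses the old pivot $p$; you also inherit, rather than introduce, the paper's tacit $2\epsilon$ bound on distances from $k$ to the other dominating-set members.
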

\begin{proof}
Let $e$ and $s$ denote the index of added and removed mini-batch, respectively. According to Line 2 in Algorithm \ref{algo:orlr}, the removed batch $s \notin \Psi$. As $\abs{\Psi \cap \Psi_*} \ge 1$, there exists a mini-batch $k \in \{\Psi \cap \Psi_*\}$ that satisfies: 1) $\forall i \in \Psi$, $\norm{\bm \beta^{(k)} -\bm \beta^{(i)}}_2 \le 2\epsilon$; and 2) $\forall j\in \{\Psi^+ \setminus e\}$, $\norm{\bm \beta^{(k)} -\bm \beta^{(j)}}_2 \le 2\epsilon$. So we have

	\resizebox{.45\textwidth}{!}{
	\begin{minipage}{\linewidth}
		\begin{equation*}
		\begin{aligned}		
		\norm{\hat{\bm \beta} - \bm \beta^{(k)}}_2 =& \frac{1}{\tilde{m}} \sum_{i\in \Psi^+}\norm{\hat{\bm \beta} - \bm \beta^{(i)} +\bm \beta^{(i)} - \bm \beta^{(k)}}_2 \\
		\stackrel{(a)}{\le}& \frac{1}{\tilde{m}} \sum_{i\in \Psi^+}\norm{\hat{\bm \beta} - \bm \beta^{(i)}}_2 + \frac{1}{\tilde{m}} \sum_{i\in \Psi^+}\norm{\bm \beta^{(i)} - \bm \beta^{(k)}}_2\\
		\stackrel{(b)}{\le}& \frac{2}{\tilde{m}} \sum_{i\in \Psi^+}\norm{\bm \beta^{(i)} - \bm \beta^{(k)}}_2\\
		\end{aligned}
		\end{equation*}
\end{minipage}}

Inequality \textit{(a)} is based on the triangle inequality of the $L_2$ norm, and inequality \textit{(b)} follows the definition of $\hat{\bm \beta}$, which has $\sum_{i\in \Psi^+}\norm{\hat{\bm \beta} -\bm \beta^{(i)}} \le \sum_{i\in \Psi^+}\norm{\bm \beta^{(k)} -\bm \beta^{(i)}}$. 

Two conditions exist for added mini batch $e$. For the condition $e \notin \Psi^+$, the new dominating set $\Psi^+=\Psi$. So $\norm{\hat{\bm \beta} - \bm \beta^{(k)}}_2 \le$ $\frac{2}{\tilde{m}} \sum_{i\in \Psi}\norm{\bm \beta^{(i)} - \bm \beta^{(k)}}_2 \le 4\epsilon$. For condition $e \in \Psi^+$, we have

	\resizebox{.45\textwidth}{!}{
	\begin{minipage}{\linewidth}
		\begin{equation*}
		\begin{aligned}		
		\norm{\hat{\bm \beta} - \bm \beta&^{(k)}}_2 \le \frac{2}{\tilde{m}} \sum_{i\in \Psi^+}\norm{\bm \beta^{(i)} - \bm \beta^{(k)}}_2 \\
		\stackrel{(c)}{\le}& \frac{2}{\tilde{m}} \bigg(\norm{\bm \beta^{(k)} - \bm \beta^{(e)}}_2 + \sum_{i\in \{\Psi^+ \cap \Psi\}}\norm{\bm \beta^{(i)} - \bm \beta^{(k)}}_2\bigg) \\
		\stackrel{(d)}{\le}& \frac{4\epsilon}{\tilde{m}}(\tilde{m} - 1) + \frac{2}{\tilde{m}} \bigg( \norm{\bm \beta^{(k)} - \bm \beta^{(p)}}_2 + \norm{\bm \beta^{(p)} - \bm \beta^{(e)}}_2 \bigg) \\
		\stackrel{(e)}{\le}& \frac{4\epsilon}{\tilde{m}}(\tilde{m} - 1) + \frac{8\epsilon}{\tilde{m}} \le 4\epsilon + \frac{4\epsilon}{\tilde{m}}
		\end{aligned}
		\end{equation*}
\end{minipage}}

Inequality \textit{(c)} expands the set $\Psi^+$ into the new mini batch $e$ and set $\{\Psi^+ \cap \Psi\}$. Inequality \textit{(d)} uses the fact that $\forall i \in \Psi$, $\norm{\bm \beta^{(k)} -\bm \beta^{(i)}}_2 \le 2\epsilon$ and the triangle inequality of $\bm \beta^{(p)}$, where $p$ is the pivot batch corresponding to $\Pi$. As $\max(\norm{\bm \beta^{(k)} - \bm \beta^{(p)}}_2 , \norm{\bm \beta^{(p)} - \bm \beta^{(e)}}_2) \le 2\epsilon$, inequality \textit{(e)} is satisfied. Combining two conditions, we conclude $\norm{\hat{\bm \beta} - \bm \beta^{(k)}}_2 \le 4\epsilon + \frac{4\epsilon}{\tilde{m}}$. Therefore, the error bound of $\norm{\hat{\bm \beta} - \bm \beta_*}_2$ is as follows.

	\resizebox{.45\textwidth}{!}{
	\begin{minipage}{\linewidth}
		\begin{equation*}
		\begin{aligned}		
		\norm{\hat{\bm \beta} - \bm \beta_*}_2 \le& \norm{\hat{\bm \beta} -\bm \beta^{(k)}}_2 + \norm{\bm \beta^{(k)}- \bm \beta_*}_2 \\
		\stackrel{(f)}{\le}& 4\epsilon + \frac{4\epsilon}{\tilde{m}} + \epsilon \le 5\epsilon + \frac{4\epsilon}{\tilde{m}}\\
		\end{aligned}
		\end{equation*}
\end{minipage}}

Inequality \textit{(f)} utilizes the fact that $\norm{\bm \beta^{(k)}- \bm \beta_*}_2 \le \epsilon$. Note that if $\tilde{m}$ is large enough, $\norm{\hat{\bm \beta} - \bm \beta_*}_2 \precsim 5\epsilon$, which is the same as the error bound in Theorem \ref{theorem:t2}.

\end{proof}

\section{Experiment}\label{section:experiment}

In this section, the proposed algorithms \textit{DRLR} and \textit{ORLR} are evaluated on both synthetic and real-world datasets. After the experiment setup has been introduced in Section \ref{section:exp_setup}, we present results on the effectiveness of the methods against several existing methods on both synthetic and real-world datasets, along with an analysis of efficiency for all the comparison methods, in Section \ref{section:performance}. All the experiments were conducted on a 64-bit machine with an Intel(R) Core(TM) quad-core processor (i7CPU@3.6GHz) and 32.0GB memory. Details of both the source code and datasets used in the experiment can be downloaded here\footnote{\url{https://goo.gl/b5qqYK}}.

\subsection{Experiment Setup} \label{section:exp_setup}

\subsubsection{Datasets and Labels} Our dataset is composed of synthetic and real-world data. The simulation samples were randomly generated according to the model in Equation \eqref{eq:model} for each mini-batch, sampling the regression coefficients $\bm \beta_* \in \mathbbm{R}^p$ as a random unit norm vector. The covariance data $X^{(i)}$ for each mini-batch was drawn independently and identically distributed from $\bm x_i \sim \mathcal{N}(0, I_p)$ and the uncorrupted response variables were generated as $\bm y^{(i)}_* = {\big[\bm X^{(i)}\big]}^T \bm \beta_* + \bm \varepsilon^{(i)}$, where the additive dense noise was $\varepsilon^{(i)}_i \sim \mathcal{N}(0, \sigma^2)$. The corrupted response vector for each mini-batch was generated as $\bm y^{(i)} = \bm y^{(i)}_* + \bm u^{(i)}$, where the corruption vector $\bm u^{(i)}$ was sampled from the uniform distribution $[-5\|\bm y^{(i)}_*\|_\infty, 5\|\bm y^{(i)}_*\|_\infty]$. The set of uncorrupted points $Z^{(i)}_*$ was selected as a uniformly random $\gamma^{(i)} n$-sized subset of $[n]$, where $\gamma^{(i)}$ is the corruption ratio of the $i\nth$ mini-batch. We define $\gamma$ as the corruption ratio of the total $m$ mini-batches; $\gamma^{(i)}$ is randomly chosen in the condition of $\gamma = \sum_i^m \gamma^{(i)}$, where $\gamma$ should be less than $1/2$ to ensure the number of uncorrupted samples is greater than the number of corrupted ones.

The real-world datasets we use contain house rental transaction data from \textit{New York City} and \textit{Los Angeles} on Airbnb\footnote{https://www.airbnb.com/} website from January 2015 to October 2016. The datasets can be downloaded here\footnote{\url{http://insideairbnb.com/get-the-data.html}}. For the \textit{New York City} dataset, we use the first 321,530 data samples from January 2015 to December 2015 as training data and the remaining 329,187 samples from January to October 2016 as testing data. For the \textit{Los Angeles} dataset, the first 106,438 samples from May 2015 to May 2016 are chosen as training data, and the remaining 103,711 samples are used as testing data. In each dataset, there were 21 features after data preprocessing, including the number of beds and bathrooms, location, and average price in the area.


\subsubsection{Evaluation Metrics}

For the synthetic data, we measured the performance of the regression coefficients recovery using the averaged $L_2$ error
\begin{align*}
\small
e = \norm{\hat{\bm \beta} - \bm \beta_*}_2
\end{align*}

where $\hat{\bm \beta}$ represents the recovered coefficients for each compared method and $\bm \beta_*$ is the ground truth regression coefficients. 
To compare the scalability of each method, the CPU running time for each of the competing methods was also measured.

For the real-world dataset, we use the mean absolute error (MAE) to evaluate the performance of rental price prediction. Defining $\hat{\bm y}$ and $\bm y$ as the predicted price and ground truth price, respectively, the mean absolute error between $\hat{\bm y}$ and $\bm y$ can be presented as follows.
\begin{align*}
\small
\mae (\hat{\bm y},\bm y)= \frac{1}{n} \sum_{i=1}^{n}\big|\hat{\bm y_i} - \bm y_i\big|
\end{align*}
\begin{figure*}[ht]
	\centering
	\scalebox{0.94}{
		\subfigure[p=100, n=5K, b=10, dense noise]{%
			\label{fig:beta_1} 
			\includegraphics[trim=0.6cm 0.1cm 0.6cm 0.1cm,width=0.31\linewidth]{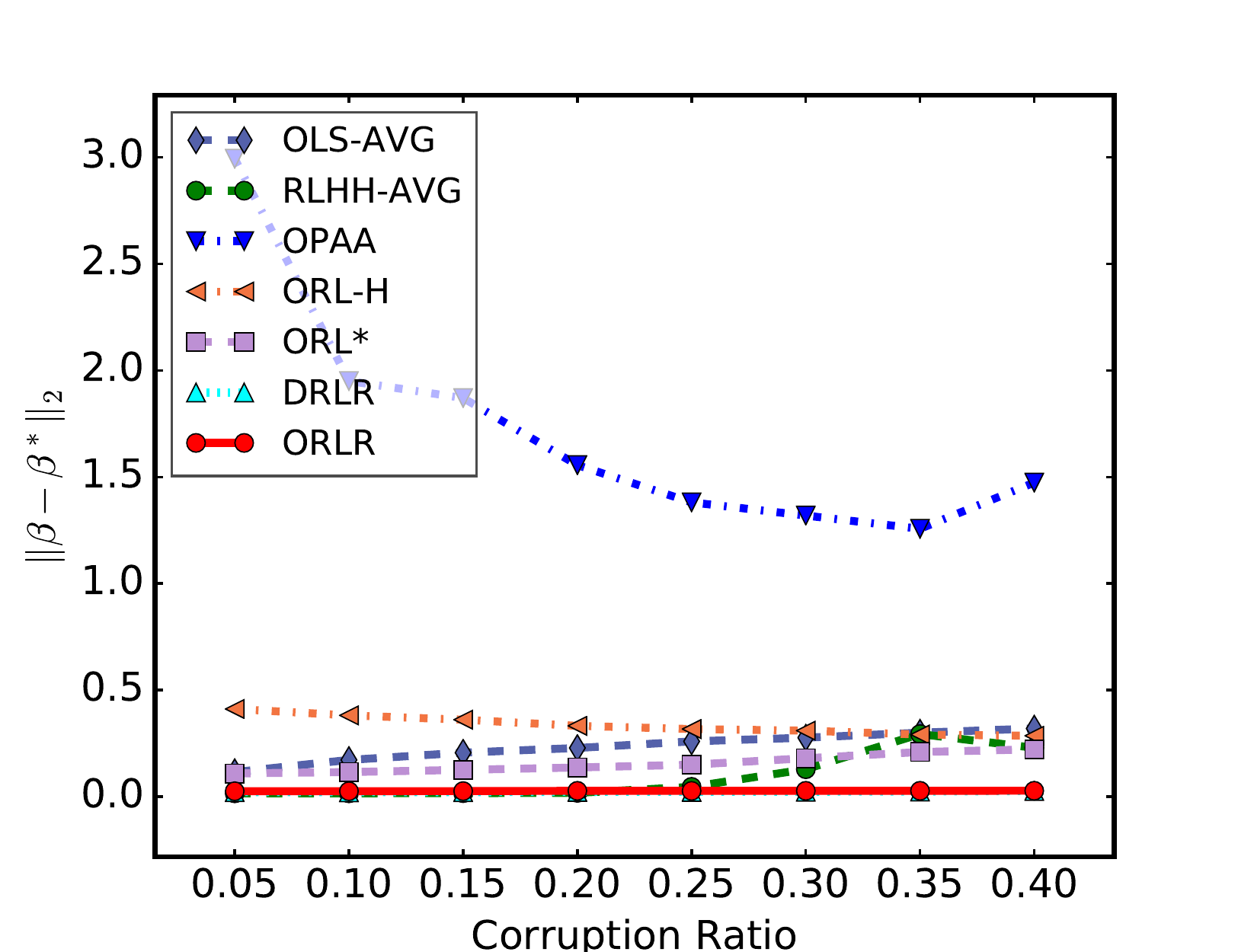}
	}}
	\scalebox{0.94}{
		\subfigure[p=100, n=10K, b=10, dense noise]{%
			\label{fig:beta_2}
			\includegraphics[trim=0.6cm 0.1cm 0.6cm 0.1cm,width=0.31\linewidth]{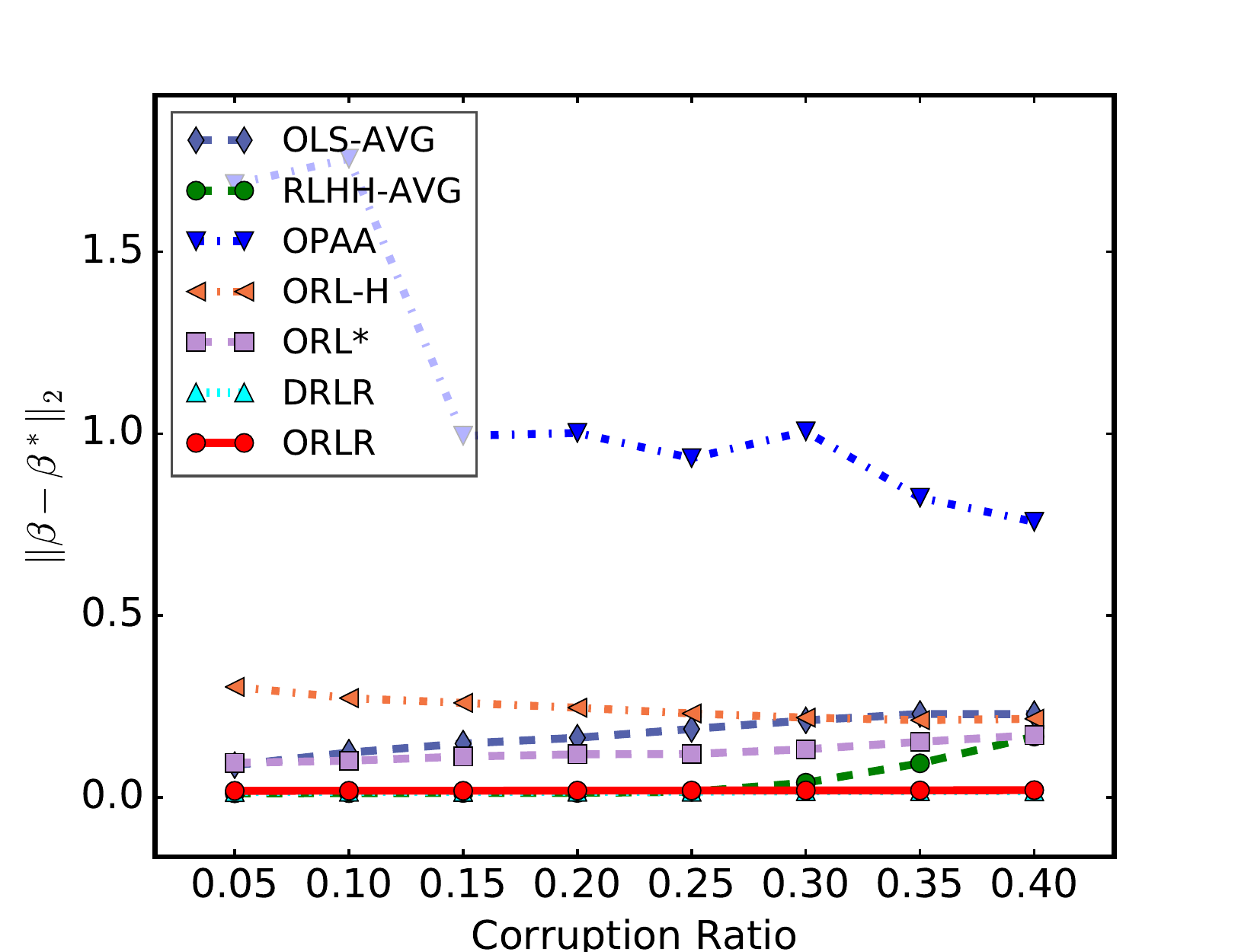}
	}}
	\scalebox{0.94}{
		\subfigure[p=400, n=10K, b=10, dense noise]{%
			\label{fig:beta_3}
			\includegraphics[trim=0.6cm 0.1cm 0.6cm 0.1cm,width=0.31\linewidth]{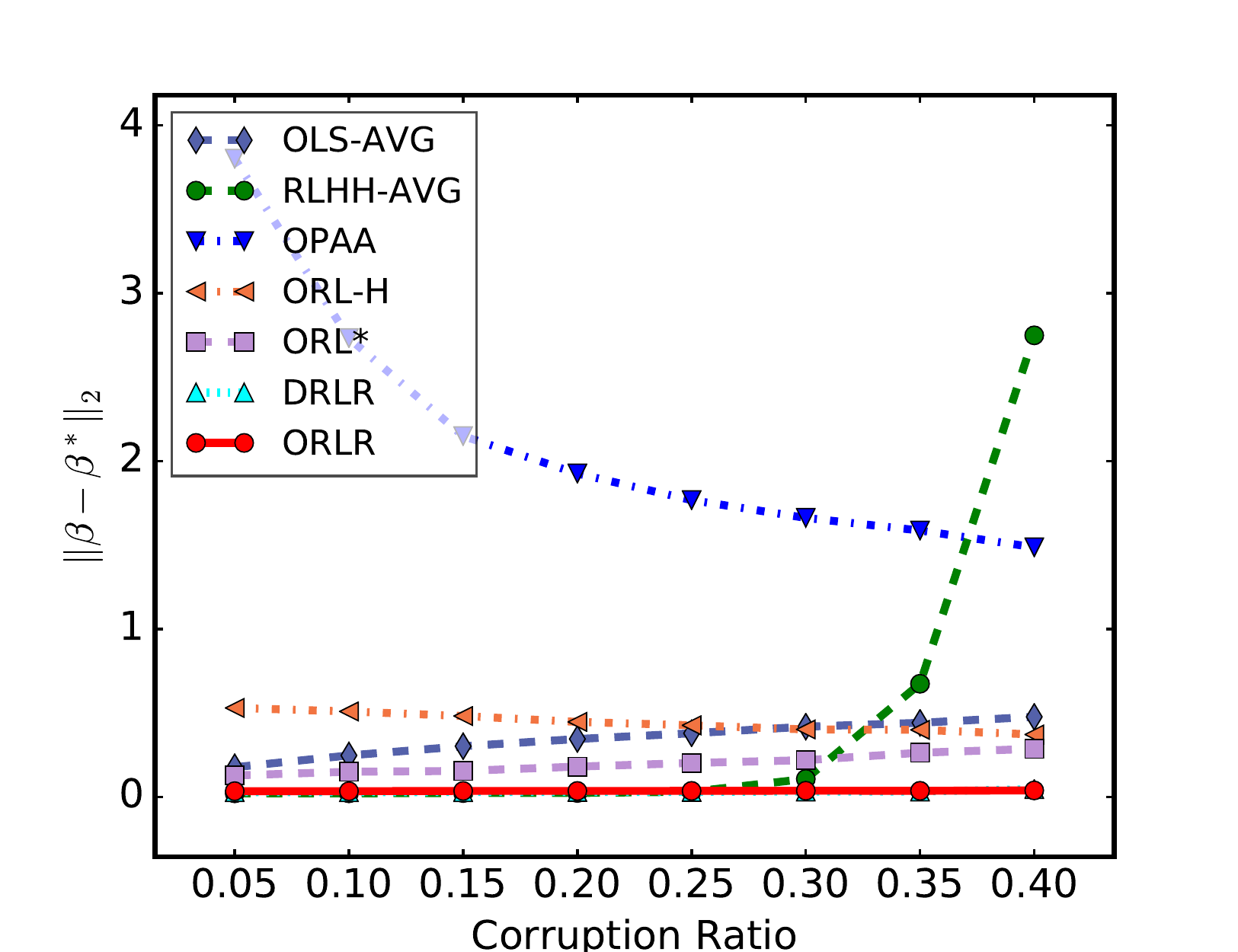}
	}}
	\scalebox{0.94}{
		\subfigure[p=100, n=10K, b=30, dense noise]{%
			\label{fig:beta_4}
			\includegraphics[trim=0.6cm 0.1cm 0.6cm 0.1cm,width=0.31\linewidth]{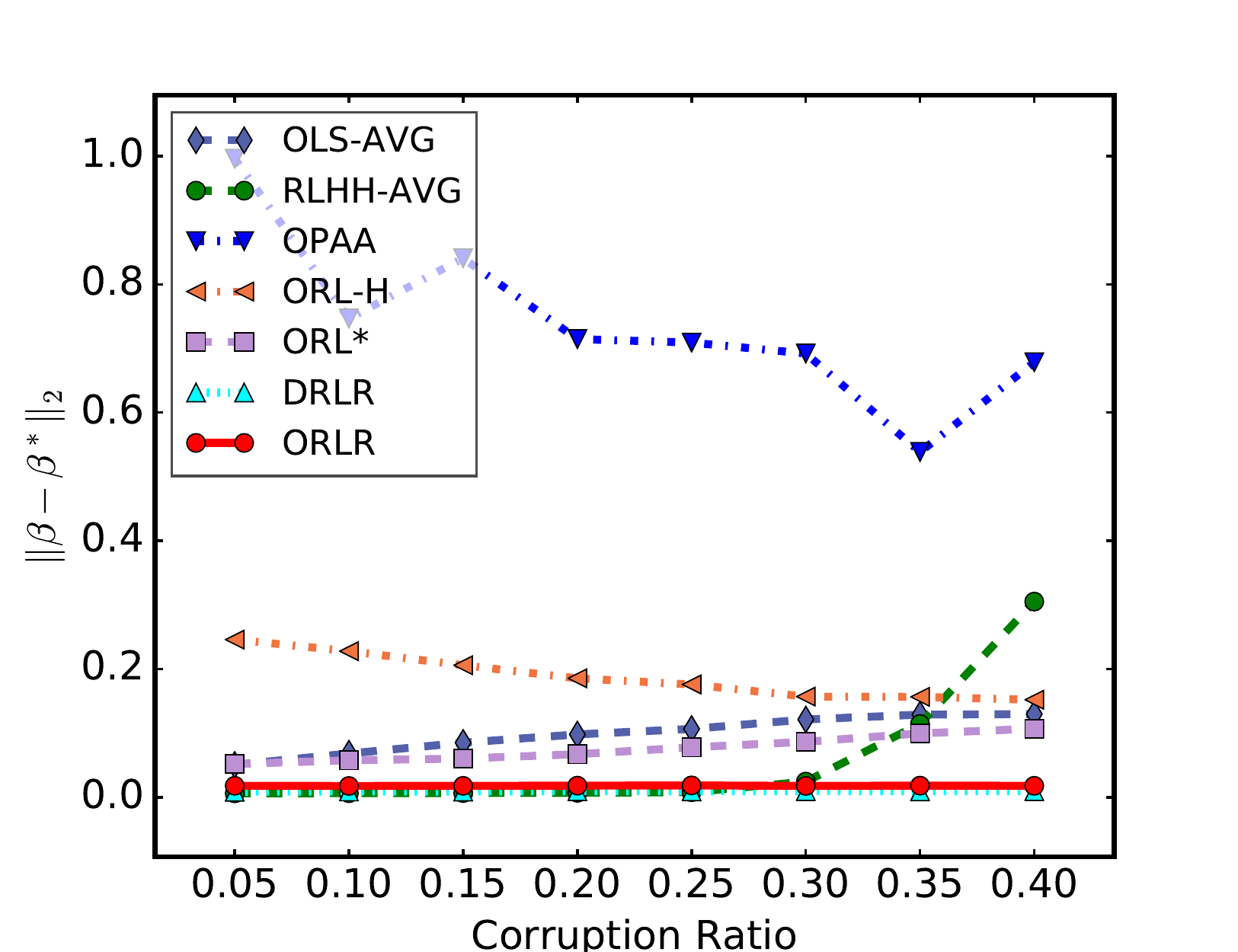}
	}}  
	\scalebox{0.94}{
		\subfigure[p=100, n=5K, b=10, no dense noise]{%
			\label{fig:beta_5}
			\includegraphics[trim=0.6cm 0.1cm 0.6cm 0.1cm,width=0.31\linewidth]{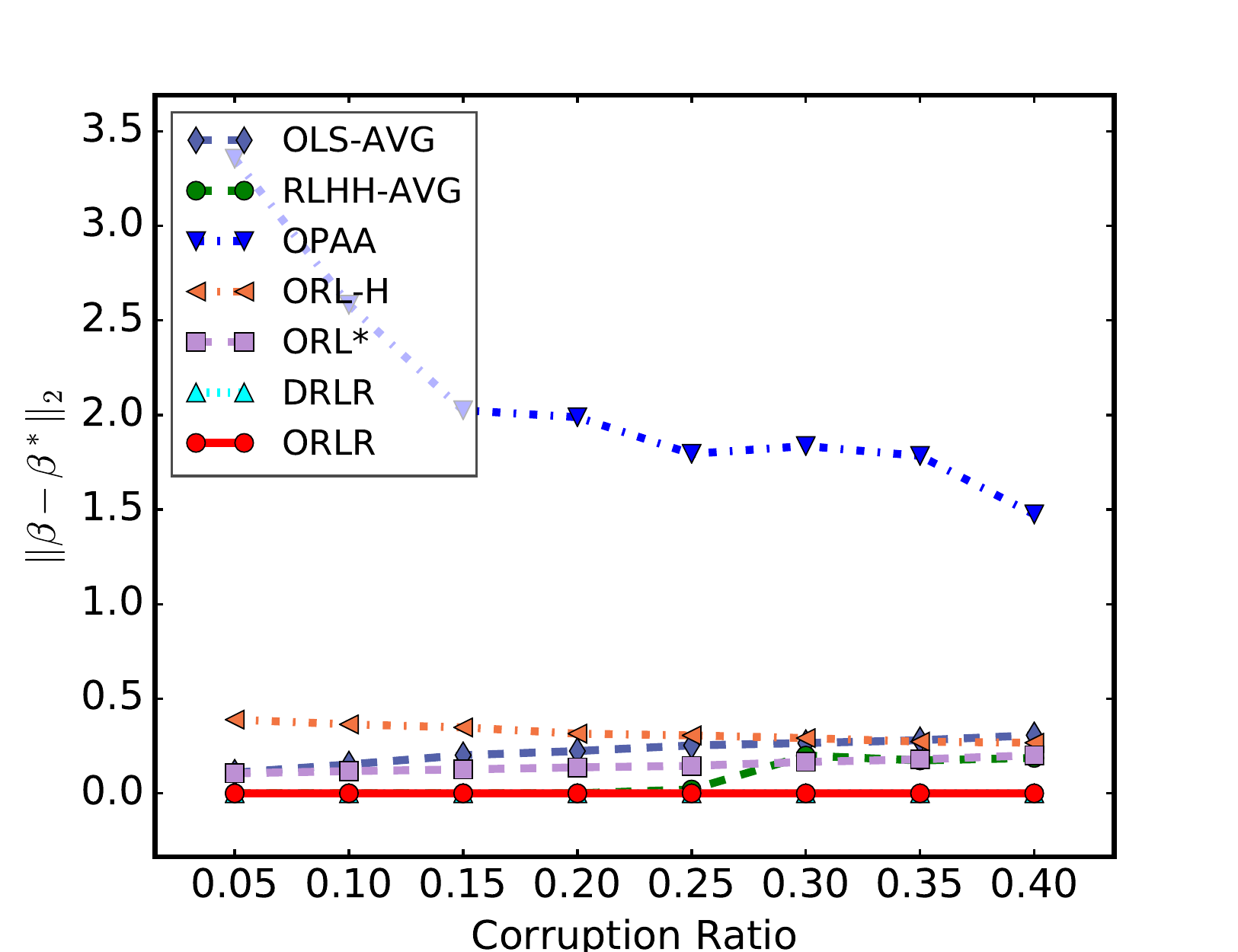}
	}}
	\scalebox{0.94}{
		\subfigure[p=200, n=10K, b=20, no dense noise]{%
			\label{fig:beta_6}
			\includegraphics[trim=0.6cm 0.1cm 0.6cm 0.1cm,width=0.31\linewidth]{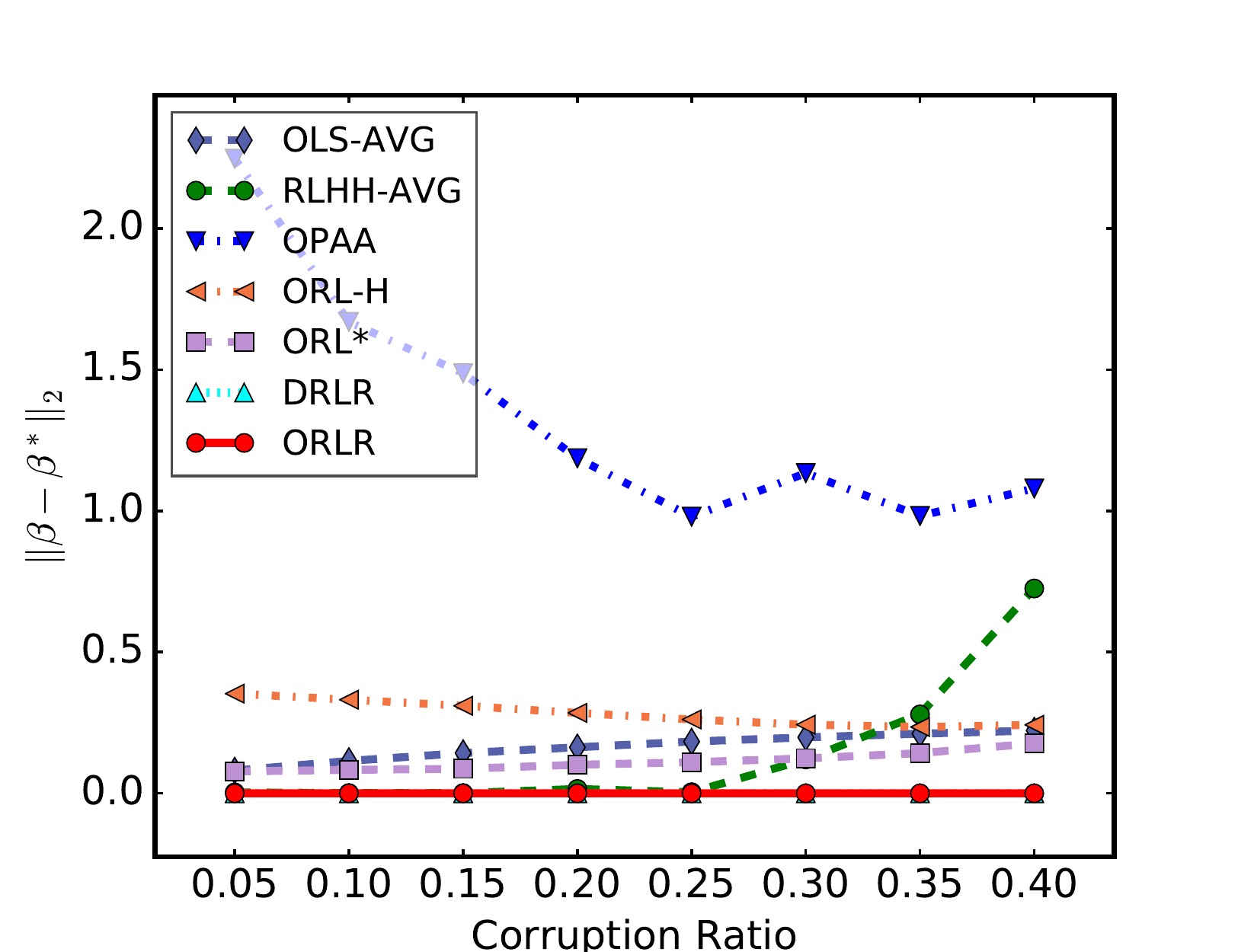}
	}}
	
	\caption{%
		\small Performance on regression coefficients recovery for different corruption ratios in uniform distribution.
	}%
	\label{fig:beta}
\end{figure*}

\subsubsection{Comparison Methods}

The following methods are included in the performance comparison presented here: The \textit{averaged ordinary least-squares} (\textit{OLS-AVG}) method takes the average over the regression coefficients of each mini-batch, which is computed by the ordinary least-squares method. \textit{RLHH-AVG} applies a recently proposed robust method, \textit{RLHH} \cite{rlhh17}, on each mini-batch and averages the regression coefficients of all the mini-batches. Different from \textit{OLS-AVG}, \textit{RLHH-AVG} can estimate the corrupted samples in each mini-batch by a heuristic method. The \textit{online passive aggressive algorithm} (\textit{OPAA}) \cite{crammer2006online} is an online algorithm for adaptive linear regression, which updates the model incrementally for each new data sample. We set the threshold parameter $\xi$, which controls the inaccuracy sensitively, to 22. We also compared our method to an \textit{online robust learning} approach (\textit{ORL}) \cite{DBLP:journals/corr/FengXM17}, which addresses both the robustness and scalability issues in the regression problem. As the method requires a parameter for the corruption ratio, which is difficult to estimate in practice, we chose two versions with different parameter settings: \textit{ORL*} and \textit{ORL-H}. \textit{ORL*} uses the true corruption ratio as its parameter, and \textit{ORL-H} sets the outlier fraction $\lambda$ to 0.5, which is a recommended setting in \cite{DBLP:journals/corr/FengXM17} if it is unknown. For our proposed methods, we use \textit{DRLR} and \textit{ORLR} to evaluate our methods in both distributed and online settings. For \textit{ORLR}, we set the number of previous mini-batch estimates to seven if not specified. All the results from comparison methods will be averaged over 10 runs.

\begin{table}[b]
	\caption{Performance on Regression Coefficients Recovery\ \ \ \ \ \ \ \ \ \ \ \ \ \ \ \ \ \ \ \ \ \ \ \ \ \ \ \ \  in Different Corrupted Mini-batches}
	\centering
	\small
	\label{table:batchcorr}
	
	\scalebox{1}{
		\begin{tabularx}{0.48\textwidth}{c *{6}{Y}}
			\toprule
			& \textbf{0/20} & \textbf{1/20} & \textbf{2/20} & \textbf{4/20} & \textbf{6/20} & \textbf{8/20} \\
			\midrule
			\textbf{OLS-AVG}		&0.126 & 0.133 & 0.147 & 0.169 & 0.193 & 0.208\\		
			\textbf{RLHH-AVG}		&\textbf{0.011} & 0.065 & 0.096 & 0.131 & 0.163 & 0.185\\	
			\textbf{OPAA}			&1.537 & 1.577 & 1.385 & 1.573 & 1.539 & 1.483\\		
			\textbf{ORL-H}			&0.346 & 0.362 & 0.358 & 0.392 & 0.417 & 0.442\\		
			\textbf{ORL*}			&0.078 & 0.089 & 0.092 & 0.106 & 0.113 & 0.150\\			
			\textbf{ORLR}		&0.025 & \textbf{0.026} & \textbf{0.027} & \textbf{0.026} & \textbf{0.026} & \textbf{0.026}\\
			\midrule
			\textbf{DRLR}		&0.015 & \textbf{0.015} & \textbf{0.015} & \textbf{0.015} & \textbf{0.015} & \textbf{0.015}\\	
			\bottomrule
		\end{tabularx}
	}

\end{table}

\begin{table*}[ht]
	\caption{Mean Absolute Error of Rental Price Prediction}
	\centering
	\small
	\label{table:rental_price}

	\scalebox{1}{
		\begin{tabularx}{0.97\textwidth}{c *{5}{Y}||c}
			\toprule
			& \multicolumn{6}{c}{\textbf{New York City} (\textbf{Corruption Ratio})} \\
			\cmidrule(lr){2-7} 
			& \textbf{5\%} & \textbf{10\%} & \textbf{20\%} & \textbf{30\%} & \textbf{40\%} & \textbf{Avg.} \\
			\midrule
			
			\textbf{OLS-AVG}		&3.256$\pm$0.449 & 3.519$\pm$0.797 & 3.976$\pm$0.786 & 4.230$\pm$1.292 & 4.356$\pm$1.582 & 3.867$\pm$0.981\\		
			\textbf{RLHH-AVG}		&\textbf{2.823$\pm$0.000} & \textbf{2.824$\pm$0.000} & 13.092$\pm$25.354 & 35.184$\pm$37.426 & 42.713$\pm$19.304 & 19.327$\pm$16.417\\	
			\textbf{OPAA}			&91.287$\pm$51.475 & 100.864$\pm$72.239 & 121.087$\pm$64.618 & 92.735$\pm$38.063 & 152.479$\pm$57.553 & 111.690$\pm$56.790\\		
			\textbf{ORL-H}			&6.832$\pm$0.004 & 6.828$\pm$0.007 & 6.732$\pm$0.240 & 6.803$\pm$0.107 & 6.573$\pm$0.189 & 6.754$\pm$0.109\\		
			\textbf{ORL*}			&6.538$\pm$0.293 & 6.384$\pm$0.274 & 6.394$\pm$0.208 & 6.406$\pm$0.180 & 6.471$\pm$0.190 & 6.439$\pm$0.229\\
			\textbf{DRLR}		&2.824$\pm$0.000 & \textbf{2.824$\pm$0.000} & \textbf{2.823$\pm$0.000} & 3.185$\pm$0.523 & 4.342$\pm$1.784 & 3.200$\pm$0.461\\
			\textbf{ORLR}		&2.824$\pm$0.001 & \textbf{2.824$\pm$0.000} & \textbf{2.823$\pm$0.000} & \textbf{2.883$\pm$0.187} & \textbf{3.563$\pm$0.935} & \textbf{2.983$\pm$0.225}\\	
		\end{tabularx}}
		
	\scalebox{1}{
		\begin{tabularx}{0.97\textwidth}{c *{5}{Y}||c}
			\toprule
			& \multicolumn{6}{c}{\textbf{Los Angeles (Corruption Ratio)}} \\
			\cmidrule(lr){2-7} 
			& \textbf{5\%} & \textbf{10\%} & \textbf{20\%} & \textbf{30\%} & \textbf{40\%} & \textbf{Avg.} \\
			\midrule
			
			\textbf{OLS-AVG}		&4.641$\pm$0.664 & 4.876$\pm$0.948 & 5.607$\pm$1.349 & 6.199$\pm$1.443 & 6.797$\pm$2.822 & 5.624$\pm$1.445\\		
			\textbf{RLHH-AVG}		&\textbf{3.994$\pm$0.002} & \textbf{3.998$\pm$0.003} & 4.092$\pm$0.290 & 28.788$\pm$47.322 & 30.414$\pm$35.719 & 14.257$\pm$16.667\\	
			\textbf{OPAA}			&150.668$\pm$52.344 & 209.298$\pm$124.058 & 113.267$\pm$44.270 & 121.880$\pm$55.938 & 146.425$\pm$104.995 & 148.308$\pm$76.321\\		
			\textbf{ORL-H}			&6.819$\pm$0.045 & 6.745$\pm$0.039 & 6.667$\pm$0.084 & 6.619$\pm$0.300 & 6.317$\pm$0.394 & 6.633$\pm$0.172\\		
			\textbf{ORL*}			&6.257$\pm$0.497 & 6.303$\pm$0.304 & 6.415$\pm$0.172 & 6.308$\pm$0.377 & 6.186$\pm$0.531 & 6.294$\pm$0.376\\
			\textbf{DRLR}		&3.995$\pm$0.005 & 3.999$\pm$0.008 & \textbf{3.993$\pm$0.003} & 4.837$\pm$1.108 & 6.336$\pm$2.388 & 4.632$\pm$0.702\\
			\textbf{ORLR}		&3.997$\pm$0.008 & 3.999$\pm$0.009 & 3.994$\pm$0.004 & \textbf{4.466$\pm$1.141} & \textbf{5.802$\pm$1.990} & \textbf{4.452$\pm$0.630}\\	
			\bottomrule
	\end{tabularx}}
	
\end{table*}
\begin{figure*}[ht]
	\centering
	\subfigure[p=200, n=5K, b=10, no dense noise]{%
		\label{fig:runtime_1}
		\includegraphics[trim=0.3cm 0.1cm 0.3cm 0.1cm,width=0.31\linewidth]{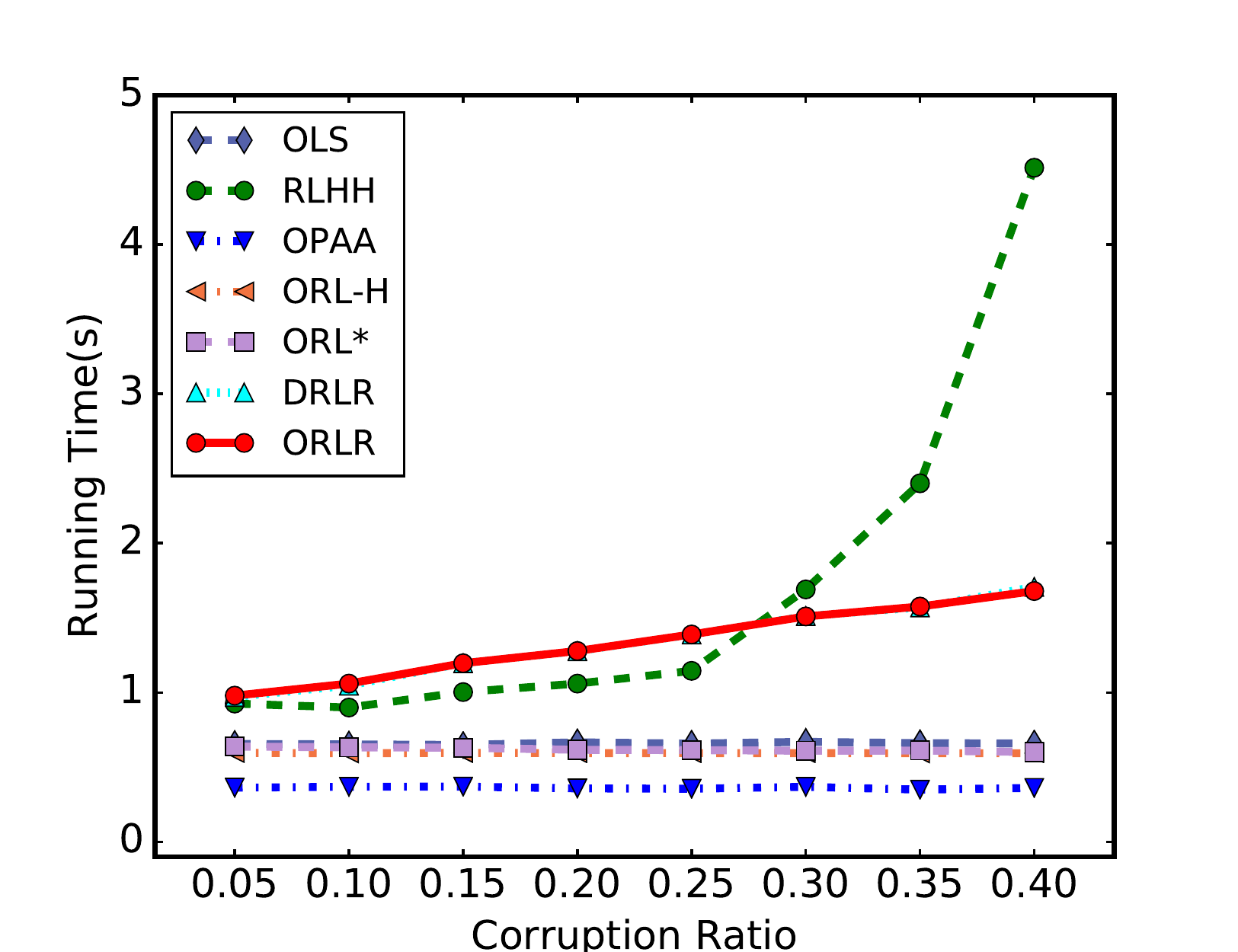}
	} %
	\subfigure[p=100, cr=0.4, b=10, dense noise]{%
		\label{fig:runtime_2}
		\includegraphics[trim=0.3cm 0.1cm 0.6cm 0.1cm,width=0.31\linewidth]{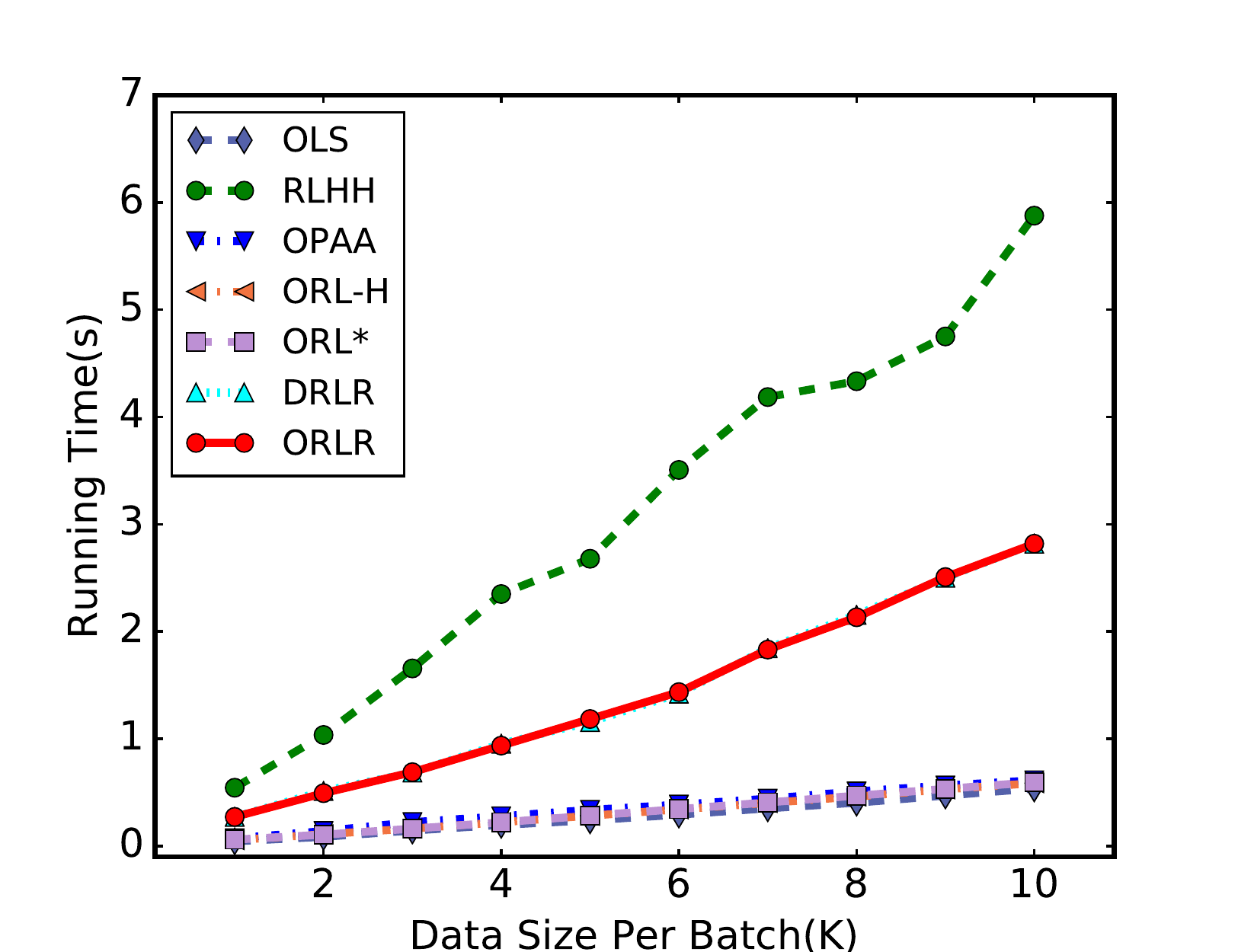}
	} %
	\subfigure[p=100, cr=0.4, n=5K, dense noise]{%
		\label{fig:runtime_3}
		\includegraphics[trim=0.3cm 0.1cm 0.3cm 0.1cm,width=0.31\linewidth]{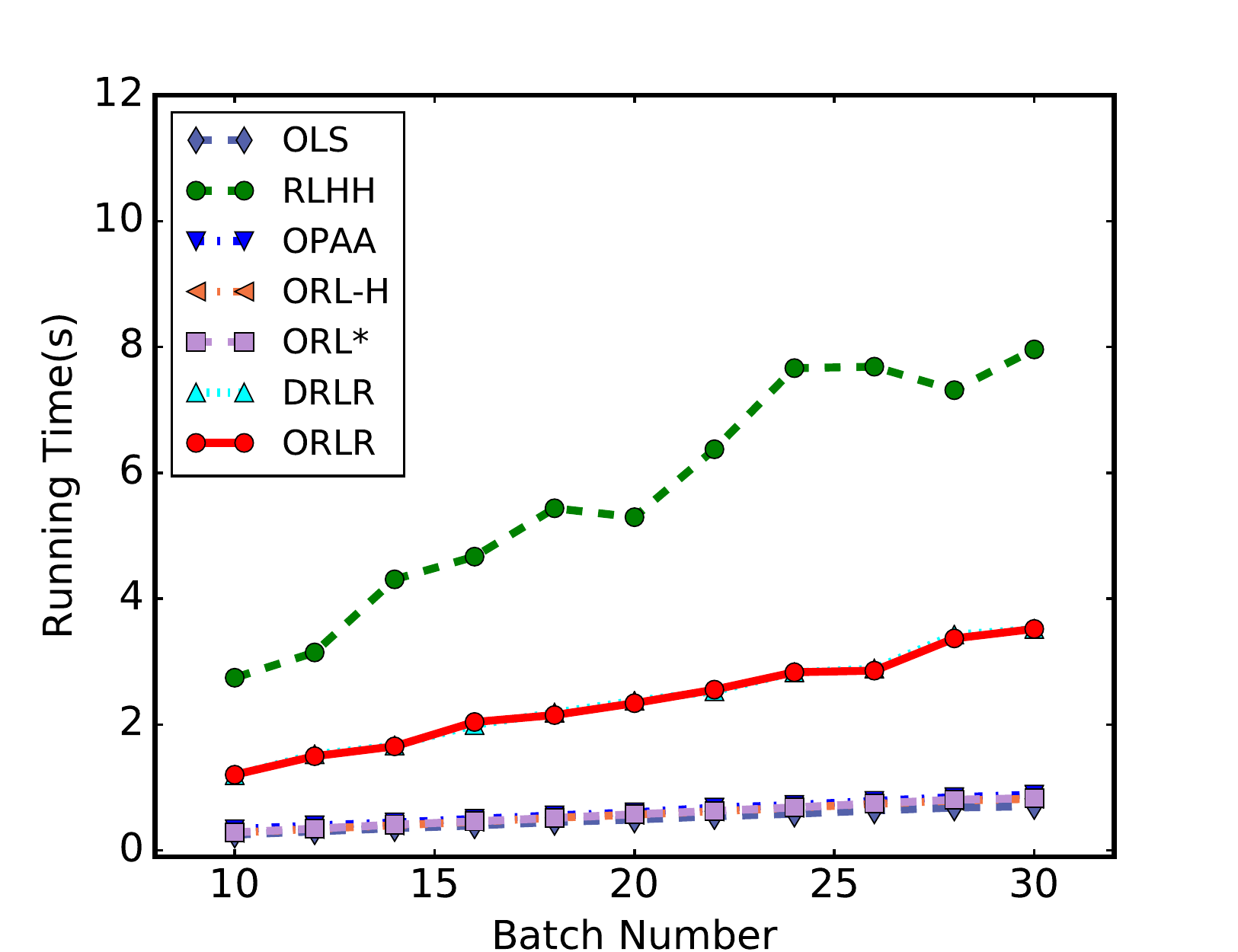}
	} %
	
	\caption{%
		\small Running time for different corruption ratios and data sizes
	}%
	\label{fig:runtime}
\end{figure*}
\subsection{Performance} \label{section:performance}

This section presents the recovery performance of the regression coefficients.

\subsubsection{Recovery of regression coefficients}
We selected seven competing methods with which to evaluate the recovery performance of all the mini-batches: \textit{OLS-AVG}, \textit{RLHH-AVG}, \textit{OPAA}, \textit{ORL-H}, \textit{ORL*}, \textit{DRLR}, and \textit{ORLR}. Figure \ref{fig:beta} shows the performance of coefficients recovery for different corruption ratios in uniform distribution. Specifically, Figures \ref{fig:beta_1} and \ref{fig:beta_2} show the recovery performance for different data sizes when the feature number is fixed. Looking at the results, we can conclude: 1) The \textit{DRLR} and \textit{ORLR} methods outperform all the competing methods, including \textit{ORL*}, whose corruption ratio parameter uses the ground truth value. Also, the error of the \textit{ORLR} method has a small difference compared to \textit{DRLR}, which indicates that the online robust consolidation performs as well as the distributed one. 2) The results of the \textit{ORL} methods are significantly affected by their corruption ratio parameters; \textit{ORL-H} performs almost three times as badly as \textit{ORL*} when the corruption ratio is less than 25\%. When the corruption ratio increases, the error of \textit{ORL-H} decreases because the actual corruption ratio is closer to 0.5, which is the estimated corruption ratio of \textit{ORL-H}. However, both \textit{DRLR} and \textit{ORLR} perform consistently throughout, with no impact of the parameter. 3) \textit{RLHH-AVG} has very competitive performance when the corruption ratio is less than 30\% because almost no mini-batch contains corrupted samples larger than 50\% when the corruption samples are randomly chosen. However, when the corruption ratio increases, some of the batches may contain large amounts of outliers, which makes some estimates be arbitrarily poor and break down the overall performance. Thus, although \textit{RLHH-AVG} works well on mini-batches with fewer outliers, it cannot handle the case when the corrupted samples are arbitrarily distributed. 4) \textit{OPAA} generally exhibits worse performance than the other algorithms because the incremental update for each data sample makes it very sensitive to outliers. Figures \ref{fig:beta_3} and \ref{fig:beta_4} show the similar performance when the number of features and batches increases. Figures \ref{fig:beta_5} and \ref{fig:beta_6} show that both the \textit{DRLR} and \textit{ORLR} methods still outperform the other methods without dense noise, with both achieving an exact recovery of ground truth regression coefficients $\bm \beta_*$.

\subsubsection{Performance on different corrupted mini-batches}
Table \ref{table:batchcorr} shows the performance of regression coefficient recovery in different settings of corrupted mini-batches, ranging from zero to eight corrupted mini-batches out of 20 mini-batches in total. Each corrupted mini-batch used in the experiment contains 90\% corrupted samples and each uncorrupted mini-batch has 10\% corrupted samples. We show the result of averaged $L_2$ error $\norm{\hat{\bm \beta} - \bm \beta_{*}}_2$ in 10 different synthetic datasets with randomly ordered mini-batches. From the result in Table \ref{table:batchcorr}, we conclude: 1) When some mini-batches are corrupted, the \textit{DRLR} method outperforms all the competing methods, and \textit{ORLR} achieves the best performance compared to other online methods. 2) \textit{RLHH-AVG} performs the best when no mini-batch is corrupted, but its recovery error is dramatically increased when the number of corrupted mini-batches increases. However, our methods perform consistently when the number of corrupted mini-batches increases. 3) \textit{ORL*} has competitive performance in different settings of corrupted mini-batches. However, its recovery error still increases two times when the number of corrupted mini-batches increases from two to eight.
\subsubsection{Result of Rental Price Prediction} 
To evaluate the robustness of our proposed methods in a real-world dataset, we compared the performance of rental price prediction in different corruption settings, ranging from 5\% to 40\%. The additional corruption  was sampled from the uniform distribution $[-0.5\abs{\bm y_i}, 0.5\abs{\bm y_i}]$, where $\abs{\bm y_i}$ represents the absolute price value of the $i\nth$ sample data. Table \ref{table:rental_price} shows the mean absolute error of rental price prediction and its corresponding standard deviation from 10 runs in the \textit{New York City} and \textit{Los Angeles} datasets. From the result, we can conclude: 1) The \textit{DRLR} and \textit{ORLR} methods outperform all the other methods in different corruption settings except when the corruption ratio is less than 10\%. 2) The \textit{RLHH-AVG} method performs the best when the corruption ratio is less than or equal to 10\%. However, as the corruption ratio rises, the error increases dramatically because some mini-batches are entirely corrupted. 3) The \textit{OLS-AVG} method has a very competitive performance in all the corruption settings because the deviation of sampled corruption is small, which is less than 50\% from the labeled data.

\subsubsection{Efficiency}
To evaluate the efficiency of our proposed method, we compared the performance of all the competing methods for three different data settings: different corruption ratios, data sizes per mini-batch, and batch numbers. In general, as Figure \ref{fig:runtime} shows, we can conclude: 
1) The \textit{OPAA} method outperforms the other methods in the three different settings because it does not consider the robustness of the data. Also, the \textit{ORL-H} and \textit{ORL*} methods have performed similarly to \textit{OPAA} method, as they use fixed corruption ratios without taking additional steps to estimate the corruption ratio.
2) The \textit{DRLR} and \textit{ORLR} methods have very competitive performance even though they take additional corruption estimation and robust consolidation steps for each mini-batch. Moreover, with increases of the corruption ratio, data size per batch, and batch number, the running time of both the \textit{DRLR} and \textit{ORLR} methods increases linearly, which is an important characteristic for the two methods to be extended to a large scale problem. In addition, our methods outperform the \textit{RLHH} method although it only estimates the corruption for each mini-batch but ignores the overall robustness, which indicates that the corruption estimation step in our method performs more efficiently than that in \textit{RLHH}.

\section{Conclusion}\label{section:conclusion}
In this paper, distributed and online robust regression algorithms, \textit{DRLR} and \textit{ORLR}, are proposed to handle the scalable least squares regression problem in the presence of adversarial corruption. To achieve this, we proposed a heuristic hard thresholding method to estimate the corruption set for each mini-batch and designed both online and distributed robust consolidation methods to ensure the overall robustness. We demonstrate that our algorithms can yield a constant upper bound on the coefficient recovery error of state-of-the-art robust regression methods. Extensive experiments on both synthetic data and real-world rental price data demonstrated that the proposed algorithms outperform the effectiveness of other comparable methods with competitive efficiency.

\section*{Acknowledgment}
This material is based upon work supported in part by the U. S. military Research Laboratory and the U. S. military Research Office under contract number W911NF-12-1-0445.



%
\bibliographystyle{unsrt}
{
	
\begin{small}
	\bibliography{onlinerc}  
\end{small}
}

\end{document}